 \newtheorem{thm}{Theorem}[section]
 \newtheorem{cor}[thm]{Corollary}
 \newtheorem{lemma}[thm]{Lemma}
 \newtheorem{prop}[thm]{Proposition}
 \theoremstyle{definition}
 \newtheorem{assumption}[thm]{Assumption}
 \numberwithin{equation}{section}
\newcommand{\caA}{{\mathcal A}}
\newcommand{\caB}{{\mathcal B}}
\newcommand{\caC}{{\mathcal C}}
\newcommand{\caE}{{\mathcal E}}
\newcommand{\caF}{{\mathcal F}}
\newcommand{\caH}{{\mathcal H}}
\newcommand{\caI}{{\mathcal I}}
\newcommand{\caL}{{\mathcal L}}
\newcommand{\caO}{{\mathcal O}}
\newcommand{\caP}{{\mathcal P}}
\newcommand{\caS}{{\mathcal S}}
\newcommand{\bbC}{{\mathbb C}}
\newcommand{\bbI}{{\mathbb I}}
\newcommand{\bbN}{{\mathbb N}}
\newcommand{\bbR}{{\mathbb R}}
\newcommand{\bbZ}{{\mathbb Z}}
\newcommand{\iu}{\mathrm{i}}
\newcommand{\id}{\bbI}
\newcommand{\str}{^{*}}
\newcommand{\ep}[1]{\mathrm{e}^{#1}}
\newcommand{\dd}{\,\mathrm{d}}
\newcommand{\Tr}{\mathrm{Tr}}
\newcommand{\norm}[1]{\left\Vert #1 \right\Vert}
\newcommand{\ad}[1]{\mathrm{ad}_{#1}}
\newcommand{\supp}{\mathrm{supp}}
\newcommand{\be}{\begin{equation}}
\newcommand{\ee}{\end{equation}}
\newcommand{\bea}{\begin{eqnarray}}
\newcommand{\eea}{\end{eqnarray}}
\newcommand{\beann}{\begin{eqnarray*}}
\newcommand{\eeann}{\end{eqnarray*}}
\begin{document}

\title{The adiabatic theorem and linear response theory for extended quantum systems}

\author{Sven Bachmann}
\address{Mathematisches Institut der Universit{\"a}t M{\"u}nchen \\ 80333 M{\"u}nchen \\ Germany}
\curraddr{Department of Mathematics \\ University of British Columbia \\ Vancouver, BC V6T 1Z2 \\ Canada}
\email{sbach@math.ubc.ca}

\author{Wojciech de Roeck}
\address{ Instituut Theoretische Fysica, KULeuven  \\
3001 Leuven  \\ Belgium }
\email{wojciech.deroeck@kuleuven.be}

\author{Martin Fraas}
\address{ Instituut Theoretische Fysica, KULeuven  \\
3001 Leuven  \\ Belgium }
\curraddr{Department of Mathematics \\ Virginia Tech \\ Blacksburg, VA 24061-0123 \\ USA}
\email{fraas@vt.edu}

\date{\today }

\begin{abstract}
The adiabatic theorem refers to a setup where an evolution equation contains a time-dependent parameter whose change is very slow, measured by a vanishing parameter $\epsilon$. Under suitable assumptions the solution of the time-inhomogenous equation stays close to an instantaneous fixpoint. In the present paper, we prove an adiabatic theorem with an error bound that is independent of the number of degrees of freedom. Our setup is that of quantum spin systems where the manifold of ground states is separated from the rest of the spectrum by a spectral gap.
One important application is the proof of the validity of linear response theory for such extended, genuinely interacting systems. In general, this is a long-standing mathematical problem, which can be solved in the present particular case of a gapped system, relevant e.g.~for the integer quantum Hall effect.
\end{abstract}

\maketitle

%%%%%%%%%%%%%%%%%%%%%%%%%%%%%%%%%%%%%%%%%%%%%%%%%
%%%%%%%%%%%%%%%%%%%%%%%%%%%%%%%%%%%%%%%%%%%%%%%%%

\section{Introduction}\label{sec:intro}

\subsection{Adiabatic Theorems}
The \emph{adiabatic theorem} stands for a rather broad principle that can be described as follows. Consider an equation giving the evolution in time $t$ of some $\varphi(t) \in B$, with $B$ some Banach space:
\be \label{eq: first ad eq}
\frac{d}{dt} \varphi(t)=  L(\varphi(t),\alpha_t),
\ee
where $L: B\times \bbR \to B$ is a smooth function and $\alpha_t$ is a parameter depending parametrically on time. Now, assume that for $\alpha_t=\alpha$ frozen in time, the equation exhibits some tendency towards a fixpoint $\varphi_\alpha$, as expressed, for example, by convergence in Cesaro mean  $\frac{1}{T}\int_0^T\varphi(t)dt   \to \varphi_\alpha$, as $T\to \infty$ for any initial $\varphi(0)$. Intuitively, one can then expect that for a very slowly $t$-dependent $\alpha_t$, the solution $\varphi(t)$ shadows the instantaneous fixpoint $\varphi_{\alpha_t}$ if the initial condition $\varphi(0)$ is close to $\varphi_{\alpha_0}$.
 
 There is a multitude of results proving such a principle, starting from the old works~\cite{BornFock,Kato50,kasuga1961adiabatic}, see \cite{Teufel,Panati2003,AE99,Jansen} for more recent accounts focusing on quantum dynamics,  \cite{lorenz2005adiabatic} for a version outside the quantum formalism, \cite{bradford2011adiabatic} for a version tailor-made for Markov processes, and \cite{gang2015adiabatic} for a case where $L_\alpha(\cdot) := L(\cdot, \alpha): B\to B$ is not linear.
 In order to guide the discussion, let us state explicitly the very general result of~\cite{AFGG} that is however restricted to a linear $L_\alpha$. Introducing a scaling parameter $\epsilon$ and a rescaled time $s=\epsilon t$, we use $s \in [0,1]$ as the basic variable and an $s$-dependent parameter $\alpha_s$, setting $\phi(s) = \varphi(t)$. The above equation \eqref{eq: first ad eq} then turns into
\be \label{eq: first ad linear}
\epsilon\frac{d}{ds} \phi(s)=  L_{\alpha_s}\phi(s).
\ee
In~\cite{AFGG}, $L_{\alpha}$ is assumed to generate a contractive semigroup for any $\alpha$, and to be such that $B=\mathrm{Ker}L_{\alpha} \oplus \overline{\mathrm{Ran}L_{\alpha}}$. This corresponds to the intuition above, insofar as the latter condition is equivalent to mean-ergodicity 
\be \label{eq: mean ergodic theorem}
\frac{1}{T}\int_0^T e^{tL_\alpha}\phi  \dd t\longrightarrow \caP_\alpha\phi,
\ee
for all $\phi\in B$, where $\caP_\alpha$ the ergodic projection on $\mathrm{Ker}L_{\alpha}$.
 The result of \cite{AFGG} is then that 
\be \label{eq: result afgg}
\sup_{s \in [0,1]}  \left\Vert (1-\caP_{\alpha_s})\phi(s) \right\Vert \leq C \epsilon 
\ee
where $\phi(s)$ is a solution of~(\ref{eq: first ad linear}), with an initial condition $\phi(0)$ satisfying the same bound. In other words: the dynamically evolved $\phi(s)$ remains within the instantaneous kernel, up to a small leak. 

\subsection{Fundamental applications}

There are at least two domains in physics where the adiabatic principle is fundamental. 

\subsubsection*{Thermodynamic Processes}
   We consider a phase space $\Omega$ (generalised positions and momenta) of a physical system, endowed with a symplectic structure and a Liouville measure $\omega$. A Hamilton function $H_\alpha:\Omega\to\bbR$ generates a flow on $\Omega$, for which the measure $\omega$ is invariant. Let us choose $B$ to be the class of signed measures that are absolutely continuous w.r.t.\ $\omega$ and we identify these measures with their densities $\varphi$, so we can put $B=L^1(\Omega,\omega)$.  Then the flow on $\Omega$ is naturally lifted to $B$ as 
$$
\frac{d}{dt}\varphi(t) = \{H_{\alpha},\varphi(t) \}
$$
where $\{\cdot,\cdot \}$ is the Poisson bracket corresponding to the symplectic structure.
 If the flow generated by the Hamiltonian $H_\alpha$ is ergodic on any energy shell $\Omega_E=\{ x \in \Omega: H(x)=E \}$ and some technical conditions are met, then~\eqref{eq: mean ergodic theorem} is  satisfied with $P_\alpha$ projecting on densities that are functions of $H_\alpha$. Rescaling the time-dependent equation for the family $H_{\alpha_{t}}$ to get
\begin{equation*}
\epsilon \frac{d}{ds}\phi(s) = \{H_{\alpha_s},\phi(s)\},
\end{equation*}
we are in the framework described above and the result \eqref{eq: result afgg} applies. It teaches us that the evolution proceeds via instantaneous fixpoints, i.e.\ densities that are functions of $H_{\alpha}$.  
Indeed, it is a basic tenet of thermodynamics that the evolution corresponding to a sufficiently slow $\alpha_t$ --- called in this context a \emph{quasi-static} process --- remains within the set of \emph{Gibbs states}. 
\begin{equation} \label{eq: gibbs states}
\varphi_{\alpha,\beta} \propto \ep{-\beta H_\alpha},
\end{equation}
where $0<\beta<\infty$. Of course, the fixpoints are not automatically of the form \eqref{eq: gibbs states} but typicality considerations involving `equivalence of ensembles' make them physically equivalent to Gibbs states, in the case where the system has a large number of degrees of freedom (see also below).  Therefore, the adiabatic principle is often invoked to justify the relevance and prevalence of Gibbs state to describe time-inhomogenous systems.

\subsubsection*{Quantum Dynamics} Here the Banach space $B$ is a separable Hilbert space $\caH$, and there is a family of self-adjoint operators $H_\alpha$ (`Hamiltonians') acting on $\caH$.  The dynamics of $\phi \in \caH$ in rescaled time is given by the driven Schr\"odinger equation
\be \label{eq: first schro}
\epsilon\frac{d}{ds} \phi(s) = -\iu H_{\alpha_s} \phi(s).
\ee 
In this case, the condition \eqref{eq: mean ergodic theorem} is satisfied by the spectral theorem. To exclude uninteresting trivial cases, we assume that $\mathrm{Ker} H_\alpha$ is non-empty for all $\alpha$. For simplicity, we also assume that $H_\alpha$ are bounded. Then, the result is given again by \eqref{eq: result afgg}, which corresponds in this case to the original works of Born, Fock and Kato~\cite{BornFock,Kato50,kasuga1961adiabatic}.

Note that despite the formulation, there is no distinguished role played by the eigenvalue $0$ of $H_\alpha$. 
Indeed, assume that $H_\alpha$ has an eigenvalue $E_{\alpha}$, depending smoothly on $\alpha$ (recall that we assume everything to be smooth throughout). If $\phi(s)$ solves \eqref{eq: first schro}, then 
$$
\widetilde{\phi}(s)=e^{-(\iu/\epsilon)\int_0^s E_{\alpha_u} d u}\widetilde\phi(s)
$$
solves 
\be \label{eq: second schro}
\epsilon\frac{d}{ds} \widetilde{\phi}(s) = -\iu \widetilde H_{\alpha_s}\widetilde{\phi}(s),
\ee 
with $\widetilde H_{\alpha}=H_\alpha-E_\alpha$. The latter equation \eqref{eq: second schro} reduces the evolution problem for $E_\alpha$-eigenvectors to the case above describing the evolution of vectors in the kernel. 
The adiabatic theorem in quantum mechanics is often invoked to argue that a system can be prepared in a certain state by switching the Hamiltonian slowly.  

Finally, it is worthwhile to note here that the adiabatic theorem can be substantially strengthened if the $H_\alpha$ have a \emph{spectral gap}, i.e.\ if
the spectra $\sigma(H_\alpha)\setminus \{0\}$ are bounded away from $0$, uniformly in $\alpha$. In that case, at times in which the first $m$-derivatives of the Hamiltonian vanish, the result \eqref{eq: result afgg} holds with an improved error bound $C_m\epsilon^m$, where the integer $m$ depends on the smoothness in $\alpha$ and it can be made arbitrarily large if $\alpha\mapsto H_\alpha$ is $C^\infty$. 

\subsection{Extended systems}

The two above applications of the adiabatic theorem --- thermodynamic processes and quantum dynamics --- refer to dynamical systems of physics. A key parameter of any such dynamical system is the \emph{number $N$ of degrees of freedom} it has. A related important notion is that of the spatial extension of the physical system and its \emph{locality}. Indeed, there are mathematical results that become irrelevant when the number of degrees of freedom grows large. Take for example the KAM theorem for Hamiltonian dynamics, stating that small perturbations of integrable systems inherit the integrability, see e.g.~\cite{Broer:2004vs} and references therein. How small the perturbation must be depends on $N$ and it must be vanishing with growing $N$. It is therefore commonly believed that the KAM theorem has little or no bearing on large physical systems. 
On the face of it, the adiabatic theorem suffers from a similar hitch, as the constant $C$ in \eqref{eq: result afgg} grows typically linearly  with $N$.  As we will explain below, this divergence is unavoidable as the adiabatic theorem with an $N$-independent $C$ is shown to be wrong.

Yet, a lot of physics applications do involve \emph{macroscopic} systems, where $N \sim 10^{23}$. Indeed, thermodynamics always assumes that the dynamical system is macroscopic, namely that one should take the `thermodynamic limit' $N\to\infty$ which justifies the notion of typicality or the applicability of laws of large numbers.  In quantum dynamics, many considerations involving the adiabatic theorem deal with extended lattices of spins. We mention most notably the problem of the classification of gapped phases or the dynamical crossing of a quantum critical point, both notions being meaningless for small systems where $N$ is fixed.
 
Hence there is a fundamental and practical need to find a form of the adiabatic theorem that is applicable for arbitrarily large systems. This is precisely what we provide in the present article for gapped quantum spin systems. There are two ingredients that will allow us to obtain again~\eqref{eq: result afgg} with a $N$-independent constant $C$ for macroscopic systems.  First and foremost, we will exploit the locality of the system: only nearby degrees of freedom interact directly, and the quantum dynamics exhibits a finite speed of propagation. Second, 
 we will choose a metric that probes $\phi$ only through a few degrees of freedom at a time, instead of the norm on $B$ which considers all degrees of freedom simultaneously.
 
 \subsection{Linear response theory}
 
With an adiabatic theorem for extended systems at hand, we shall prove the validity of linear response theory. Initially developed by Kubo in~\cite{Kubo:1957cl}, the very generally phrased theory deals with the response of a physical system to a driving, in first order in the strength of the driving. In a typical setting, the driving (for example an electromotive force) is switched on adiabatically from $0$ in the infinite past to reach an intensity $\alpha>0$ at $t=0$, where one computes the induced response (the size of an electrical current in the example) to first order in $\alpha$. This yields a linear relationship between response and driving, with a proportionality constant --- the response coefficient --- given by the so-called Kubo formula.

Here again, the theory is well established for small systems whereas for macroscopic systems a proof of its validity is considered of fundamental interest, see~\cite{Simon:1984aa}. The crucial issue is that of the order of the limits: First, the system must be taken to the macroscopic limit $N\to\infty$, the adiabatic limit $\epsilon \to 0$ must be taken second, while the limit of linear response $\alpha\to 0$ is carried out last. We shall prove (i) that these limits exist in this order and (ii) give an explicit formula for the linear response coefficient, which is formally equal to Kubo's formula but rigorously well-defined. 
 
%%%%%%%%%%%%%%%%%%%%%%%%%%%%%%%%%%%%%%%%%%%%%%%%%
%%%%%%%%%%%%%%%%%%%%%%%%%%%%%%%%%%%%%%%%%%%%%%%%%

\section{Setup and Results}\label{sec:Results}

%%%%%%%%%%%%%%%%%%%%%%%%%%%%%%%%%%%%%%%%%%%%%%%%%
\subsection{Quantum spin systems}\label{sub:Setup}
To introduce the spatial extent and the local structure of our system, we choose a countable graph $\Gamma$ endowed with the graph metric that we denote by $\mathrm{dist}(\cdot,\cdot)$.  We assume $\Gamma$
to be  $d$-dimensional in the sense that there is a {$\kappa<\infty$} such that
\begin{equation}\label{ddim}
\sup_{y\in\Gamma}\vert \{ x\in\Gamma: \mathrm{dist}(x,y) \leq r \}\vert\leq {\kappa} r^d.
\end{equation}
For simplicity, one can keep in mind the main example $\Gamma=\mathbb{Z}^d$.
 To each vertex $x\in\Gamma$, we associate a quantum system Hilbert space $\caH_x$, where $\sup\{\mathrm{dim}(\caH_x):x\in\Gamma\}<\infty$.  Let  $\caF(\Gamma)$ denote the set of finite subsets of $\Gamma$. For any $\Lambda\in\caF(\Gamma)$, we write $\vert\Lambda\vert$ for its volume (cardinality) and 
 $\mathrm{diam}(\Lambda) = \max_{x,y\in\Lambda}\mathrm{dist}(x,y)$. We further define its corresponding Hilbert space $\caH^\Lambda:= \otimes_{x\in\Lambda}\caH_x$ and the algebra of observables $\caA^\Lambda:= \caB(\caH^\Lambda)$. For $\Lambda' \subset \Lambda$, $\caA^{\Lambda'}$ is in a natural way isomorphic to a subalgebra of $\caA^{\Lambda}$ of the form $\caA^{\Lambda'} \otimes 1_{\Lambda\setminus \Lambda'}$ and we use this embedding without comment.  For any operator $O$, we write $\supp(O)$ for the smallest $\Lambda$ such that $O \in \caA^\Lambda$. Each $\caA^\Lambda$ is equipped with the trace $\Tr_\Lambda: \caA^\Lambda\to \bbC$.
 \subsection{Dynamics}
To define the dynamics, it is useful to first consider \emph{interactions}. An interaction $\Phi$ is a {map that associates an observable $\Phi(X) = \Phi(X)\str \in \caA^X$ to any finite set $X\in\caF(\Gamma)$.} It defines a Hamiltonin $H^\Lambda  \in \caA^\Lambda$, for any $\Lambda \in \caF(\Gamma)$ by setting
\begin{equation*}
H^\Lambda:= \sum_{X\subset\Lambda}\Phi(X).
\end{equation*}
The notion of an interaction will allow us $(i)$ to consider {all} finite volumes $\Lambda$ at the same time and $(ii)$ to naturally put a locality assumption by requiring that $\Vert \Phi(X)\Vert$ decays when $\mathrm{diam}(X)$ grows. 

Below, we will {systematically} drop the superscripts $\Lambda$ as all assumptions, bounds and statements apply uniformly in $\Lambda$.

{As discussed in the introduction, we shall} consider a time-dependent interaction $\Phi_{s}$, and the associated {time-dependent Hamiltonians} $H_s$, for $s\in[0,1]$. The evolution on the Hilbert space $\caH$ is now given by the time-dependent Schr\"odinger equation 
\begin{equation}\label{SchrödU}
\iu \epsilon \frac{d}{ds}\psi_\epsilon(s) = H_s \psi_\epsilon(s),
\end{equation}
where the parameter $\epsilon$ {indicates} that $s$ should be thought of as a rescaled time, cf.\ the introduction.

The setup described above is standard in quantum spin systems, see e.g.~Section~6.2 of~\cite{Bratteli:1997aa}, and it will be elaborated further in Section~\ref{sec: infinite volume setup}. It {includes all} {standard lattice models studied in the condensed matter theory. We provide some examples in Section~\ref{sec:examples}.}

\subsection{{Assumptions}}

{The adiabatic theorem we prove below holds for time dependent Hamiltonians of quantum spin systems that are \emph{gapped} and depend \emph{smoothly on time}. We formulate these assumptions precisely here, while keeping the notational burden to a minimum.}
\begin{assumption}[Gap Assumption]\label{A:patch}
For any $s\in[0,1]$, let $\Sigma_s$ be the spectrum of the Hamiltonian $H_s$. There is a splitting $\Sigma_s = \Sigma^{1}_s\cup \Sigma^{2}_s$, such that $[\min\Sigma^{1}_s,\max\Sigma^{1}_s] \cap \Sigma^{2}_s = \emptyset$ and
\begin{equation*}
\gamma:= \inf\{\mathrm{dist}(\Sigma^{1}_s, \Sigma^{2}_s):s\in[0,1]\}>0,
\end{equation*}
uniformly in the volume $\Lambda$.
\end{assumption}
\noindent The condition $[\min\Sigma^{1}_s,\max\Sigma^{1}_s] \cap \Sigma^{2}_s = \emptyset$ simply excludes that $\Sigma^{1,2}_s$ would consist of several interlaced patches, and the min/max exist because the spectrum is a closed set.

{Whenever Assumption~\ref{A:patch} holds, we} define $P_s$ to be the spectral projection of the Hamiltonian $H_s$ associated to $\Sigma^{1}_s$.

{The second assumption addresses the smoothness of the interactions on the one hand, and their locality on the other hand. The latter is expressed in terms of a class $\caB_{\caE,\infty}$ whose explicit definition will be given later, and whose meaning is of an exponential decay of $\norm{ \Phi_s(X)}$ with $\mathrm{diam}(X)$ for all $s\in[0,1]$}.
\begin{assumption}[locality and differentiability]\label{A:Hamiltonians}
Let $m$ be an integer not smaller than $1$.
For any $X \in \caF(\Gamma)$, the matrix-valued function $s\mapsto \Phi_s(X)$ is $(d+m)$-times continuously differentiable. These derivatives define time-dependent interactions $\{\Phi_s^{(k)}(X):X \in \caF(\Gamma)\}$ such that $\Phi^{(k)}\in\caB_{\caE,\infty}$ for $0\leq k\leq d+m$, where $\Phi_s^{(0)}=\Phi_s$. Moreover, $\Phi_s^{(k)}(X)\vert_{s=0} = 0$ for all $X\in\caF(\Gamma)$ and $1\leq k\leq d+m$.
\end{assumption}

Note that if Assumption~\ref{A:Hamiltonians} holds, then by standard perturbation theory, $s\mapsto\min\Sigma^{1}_s$ and $s\mapsto\max\Sigma^{1}_s$ defined in Assumption~\ref{A:patch} are continuous, although not necessarily uniformly so in $\Lambda$. Similarly, $s\mapsto P_s$ is $(d+m)$-times continuously differentiable.

For {simplicity, the reader may keep in mind} \emph{finite-range} interactions{: There is $R>0$ such that}
$$
\Phi_s(X) =0 \qquad \text{unless $\mathrm{diam}(X) \leq R$.}
$$
In that case the statement $\Phi^{(k)}\in\caB_{\caE,\infty}$ in Assumption \ref{A:Hamiltonians} reduces to 
$$
\sup\left\{\Vert\Phi_s^{(k)}(X)\Vert:X\in\caF(\Gamma)\right\} <\infty,
$$
{uniformly in $s\in[0,1]$.}
\subsection{Result}
To state our main result, let us consider a normalised solution $\psi_\epsilon(s)$ of the Schr\"odinger equation \eqref{SchrödU} such that the initial condition $\psi_\epsilon(0)$ lies in the spectral patch $\Sigma_0^1$, i.e.
\begin{equation}\label{Initial condition}
P_0\psi_\epsilon(0)=\psi_\epsilon(0),\qquad \Vert\psi_\epsilon(0)\Vert = 1.
\end{equation}
Our result then expresses that in a certain sense the {adiabatically} evolved $\psi_\epsilon(s)$ remains in the instantaneous patch.
\begin{thm}\label{thm:mainVector}
Let Assumption~\ref{A:patch} hold. 
\begin{enumerate}
\item If Assumption~\ref{A:Hamiltonians} is satisfied for $m=1$, then there is a vector $\widetilde \psi_\epsilon(s)\in\mathrm{Ran}(P_s)$ with $\Vert\widetilde \psi_\epsilon(s)\Vert = 1$ and a $C_1<\infty$ that is independent of $\Lambda$ and $\epsilon$, such that
\be
\sup_{s\in[0,1]}\left\vert
\langle \psi_\epsilon(s), O \psi_\epsilon(s) \rangle - \langle \widetilde \psi_\epsilon(s), O \widetilde \psi_\epsilon(s) \rangle
\right\vert \leq C_1 |\supp(O)|^2  \Vert O\Vert \epsilon
\ee
for any $O$ with $\supp(O)\subset \Lambda$.
\item If Assumption~\ref{A:Hamiltonians} is satisfied for $m>1$, and if $\Phi_s^{(k)}(X)\vert_{s=1} = 0$ for all $X\in\caF(\Gamma)$ and $1\leq k\leq d+m$, then the above bound is strengthened at the endpoint $s=1$ to
\be
\sup_{s\in[0,1]}\left\vert
\langle \psi_\epsilon(1), O \psi_\epsilon(1) \rangle - \langle \widetilde \psi_\epsilon(1), O \widetilde \psi_\epsilon(1) \rangle
\right\vert \leq C_m |\supp(O)|^2  \Vert O\Vert \epsilon^m.
\ee
\end{enumerate}
\end{thm}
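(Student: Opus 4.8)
The plan is to build the approximate eigenvector $\widetilde\psi_\epsilon(s)$ via the standard \emph{adiabatic intertwining construction}, but with all quantitative control coming from Lieb–Robinson bounds so that constants are local and $\Lambda$-independent. First I would invoke the spectral-flow/adiabatic generator: by Assumption~\ref{A:patch} the projections $P_s$ are well-separated from the rest of the spectrum by $\gamma$, so there is a self-adjoint generator $A_s$ (the ``adiabatic Hamiltonian'', constructed through Hastings' integral formula involving the weighted integral of the Heisenberg evolution of $\dot H_s$) such that $\iu[A_s,P_s]=\dot P_s$, and crucially $A_s$ is a \emph{quasi-local} observable, i.e.\ it is well-approximated by operators of bounded support with errors decaying faster than any polynomial (or exponentially, in the finite-range case). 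This quasi-locality is the whole point: it is where Assumption~\ref{A:Hamiltonians} (in particular $\Phi^{(k)}\in\caB_{\caE,\infty}$ and the $d+m$ derivatives) gets consumed, because the weight function in the Hastings formula must be integrated against Lieb–Robinson tails, and one needs enough decay in the weight — hence enough smoothness in $s$ — to close the estimates. Then I would define the \emph{adiabatic evolution} $V_\epsilon(s)$ as the solution of $\iu\epsilon \frac{d}{ds}V_\epsilon(s) = (H_s + \epsilon A_s)V_\epsilon(s)$, $V_\epsilon(0)=\id$, and set $\widetilde\psi_\epsilon(s) := V_\epsilon(s)\psi_\epsilon(0)$. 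The standard intertwining identity $V_\epsilon(s)P_0 = P_s V_\epsilon(s)$ then gives $\widetilde\psi_\epsilon(s)\in\mathrm{Ran}(P_s)$ with unit norm for free.

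Next I would estimate the difference between the true evolution $U_\epsilon(s)$ (generated by $H_s$) and the adiabatic evolution $V_\epsilon(s)$, applied to $\psi_\epsilon(0)$. Writing $W_\epsilon(s) = U_\epsilon(s)^* V_\epsilon(s)$ and differentiating, one gets $\iu\epsilon \frac{d}{ds}W_\epsilon(s) = \epsilon\, U_\epsilon(s)^* A_s U_\epsilon(s)\, W_\epsilon(s)$, so that
\begin{equation*}
\widetilde\psi_\epsilon(s) - U_\epsilon(s)\psi_\epsilon(0) = -\iu\, U_\epsilon(s)\int_0^s U_\epsilon(u)^* A_u U_\epsilon(u)\, W_\epsilon(u)\psi_\epsilon(0)\dd u .
\end{equation*}
The naive norm bound on the right-hand side is $O(\|A_u\|)$, hence $O(\Lambda)$ — this is exactly the obstruction the paper exists to circumvent. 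The resolution, and the heart of the argument, is that we do not need the vectors to be close in norm: we only test against a local observable $O$. So I would expand $\langle \psi_\epsilon(s), O\psi_\epsilon(s)\rangle - \langle\widetilde\psi_\epsilon(s),O\widetilde\psi_\epsilon(s)\rangle$ and, in each term, commute $O$ through the true Heisenberg dynamics $U_\epsilon(u)^*(\cdot)U_\epsilon(u)$ using the Lieb–Robinson bound: $U_\epsilon(u)^* O U_\epsilon(u)$ is, up to a rapidly decaying error, supported in a ball of radius $\sim v|s-u|/\epsilon$ around $\supp(O)$. Pairing this with the quasi-local $A_u$ and using $[P_u,\text{(local part of }A_u)]$-type cancellations, the matrix element of $A_u$ gets replaced by a matrix element of a \emph{local} piece of $A_u$ whose size depends only on $|\supp(O)|$ and the distance travelled — not on $|\Lambda|$. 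The factor $|\supp(O)|^2\|O\|$ in the statement comes out of counting: one factor of $|\supp(O)|$ (times a length growing with the Lieb–Robinson cone) from the region where the commutator is non-negligible, and this must be done for two appearances of $O$ in the cross terms, producing the square.

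For part (2), the improved $\epsilon^m$ bound at $s=1$ is the iterated/higher-order adiabatic expansion. Here I would build a hierarchy of correction terms $A_s^{(1)}, A_s^{(2)}, \dots, A_s^{(m-1)}$ — each again quasi-local, each constructed by solving a commutator equation against the previous remainder and using one more derivative of $H_s$ — so that the modified generator $H_s + \sum_{j=1}^{m-1}\epsilon^j A_s^{(j)}$ produces an intertwining evolution whose remainder, after integration by parts $m$ times, carries a prefactor $\epsilon^m$ \emph{plus boundary terms} at $s=0$ and $s=1$. The hypotheses $\Phi_s^{(k)}(X)|_{s=0}=0$ and $\Phi_s^{(k)}(X)|_{s=1}=0$ for $1\le k\le d+m$ are precisely what kills all these boundary terms: since the $A_s^{(j)}$ are built from time derivatives of $H_s$, they vanish at both endpoints, so each integration by parts leaves no boundary contribution and only the volume-integral remainder of order $\epsilon^m$ survives — again estimated locally by the same Lieb–Robinson argument, giving $C_m|\supp(O)|^2\|O\|\epsilon^m$.

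\textbf{Main obstacle.} I expect the genuinely hard step to be establishing the quasi-locality of the adiabatic generator $A_s$ (and its higher-order analogues) with \emph{quantitative, $\Lambda$-uniform} decay estimates, and then propagating that quasi-locality correctly through the Lieb–Robinson light-cone estimate so that the error bound depends on $|\supp(O)|$ only polynomially. In particular, matching the decay of the weight function in the Hastings construction against the super-polynomial (but sub-exponential) tails one typically gets from Lieb–Robinson, and checking that $d+m$ derivatives of the interaction really do suffice, is the delicate bookkeeping that the whole theorem rests on.
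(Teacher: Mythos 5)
There is a genuine gap in your part (1). Your plan compares the true evolution with the first-order intertwining evolution generated by $H_s+\epsilon A_s$ (with $A_s=K_s$ the spectral flow generator) and hopes to control the Duhamel remainder by testing on a local $O$ and using Lieb--Robinson. Quantitatively this cannot give $O(\epsilon)$: in the Duhamel formula the factor $\epsilon$ in the perturbation $\epsilon A_s$ is exactly cancelled by the $\epsilon^{-1}$ in the rescaled-time generator, so the integrand $\left[A_u,\,U_\epsilon(s,u)^* O\, U_\epsilon(s,u)\right]$ must be estimated as it stands; the best local estimate (the paper's Lemma~\ref{lma:rest}) bounds this commutator by $C\,|\supp(O)|^2\Vert O\Vert\,(|s-u|/\epsilon)^d$, because the Heisenberg evolution over rescaled time $|s-u|$ spreads $O$ over a cone of radius $\sim v|s-u|/\epsilon$. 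Integrating over $u\in[0,s]$ gives a bound of order $\epsilon^{-d}$, which diverges. The ``$[P_u,\text{local part of }A_u]$-type cancellations'' you invoke are exactly the missing idea: in the fixed-volume adiabatic theorem this gain of $\epsilon$ comes from an integration by parts exploiting the fast oscillation $\mathrm{e}^{-\iu tH/\epsilon}$ and the gap, but that is a norm-level manipulation which destroys the commutator-with-$O$ structure and the locality; the paper states explicitly (in its comparison with traditional adiabatic theory) that this route could not be made to work, precisely because the dynamics spreads supports over distance $\epsilon^{-1}$.

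What the paper does instead is to upgrade the dressing \emph{before} comparing dynamics: Lemma~\ref{ConstructPhi} constructs $U_{n,\epsilon}=\exp\bigl(\iu\sum_{\alpha=1}^n\epsilon^\alpha A_\alpha\bigr)$ with all $A_\alpha\in\caL_{\caS,\infty}$, chosen recursively so that $\Pi_{n,\epsilon}=U_{n,\epsilon}P_sU_{n,\epsilon}\str$ solves the equation for $H_s+R_{n,\epsilon}$ with a counter-diabatic term $R_{n,\epsilon}$ that is \emph{locally of order} $\epsilon^{n+1}$; only then are Duhamel and Lemma~\ref{lma:rest} applied, giving $\epsilon^{n+1}\cdot\epsilon^{-1}\cdot\epsilon^{-d}=\epsilon^{n-d}$, so $n\geq d+1$ is needed already for the first-order statement. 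This is why Assumption~\ref{A:Hamiltonians} demands $d+m$ derivatives even for $m=1$ --- a point your sketch misses, since your hierarchy of corrections $A_s^{(j)}$ appears only in part (2), whereas it is indispensable for part (1). Two further discrepancies: the vector $\widetilde\psi_\epsilon(s)$ is not the intertwining-evolved vector itself but $U_{n,\epsilon}(s)\str V_{n,\epsilon}(s,0)\psi_0$, and one must separately show (using that $S_{n,\epsilon}=O(\epsilon)$) that undressing costs only $O(\epsilon)$ on local observables; and the endpoint improvement in part (2) does not come from boundary terms of time integrations by parts, but from Lemma~\ref{lma:local}: the $A_\alpha$ depend locally in time on the derivatives of $H$, hence vanish at $s=1$ when the driving has stopped, so $U_{n,\epsilon}(1)=\id$ and the full $\epsilon^{m}$ accuracy of the counter-diabatic remainder survives.
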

To compare this result to \eqref{eq: result afgg}, it is convenient to first assume that the spectral patch $\Sigma^1_s =\{0\}$, namely $0$ is an {isolated} eigenvalue for all $s$. Then Theorem \ref{thm:mainVector} in the case $m=1$ is the statement in \eqref{eq: result afgg}, except that the proximity of $\psi_\epsilon(s)$ to the instantaneous spectral subspace $P_s$ is expressed by a coarser topology{, while the bound holds uniformly for all finite volumes}. The presence of the  gap, Assumption \ref{A:patch}, which is anyhow crucial for our result, allows for $\Sigma_s^1$ to contain more spectrum than a single eigenvalue. {Provided enough smoothness, the gap further allows for an error bound $\epsilon^m$ instead of simply $\epsilon$, as was already} studied via the adiabatic expansion, see  \cite{Berry90, Nenciu, Garrido, Hagedorn}. The novelty of our approach, both at the technical and conceptual level, is the construction of an adiabatic expansion that uses the locality of the dynamics and is compatible with it.

A natural question that is not explicitly addressed by the theorem is how to pick the vector $\widetilde \psi_\epsilon(s)\in\mathrm{Ran}(P_s)$. In general, there is not much to say about this, but a special case arises if we assume that the spectral patch $\Sigma^{1}_s$ corresponding to $\mathrm{Ran}(P_s)$ is (nearly) degenerate.  Let us define the splitting
$$
\delta =\sup_{s \in [0,1]}  \left(\max \Sigma^{1}_s-\min \Sigma^{1}_s \right)
$$
and assume near-degeneracy in the form
\begin{equation} \label{eq: condition delta}
\delta \leq C \mathrm{min}(\epsilon^2,  \epsilon/|\Lambda|^{-1})
\end{equation}
 It is indeed very natural in many-body systems that a degeneracy between ground states is lifted very slightly (in many examples the splitting $\delta$ is exponentially small in the volume).
As we  shall show on page~\pageref{eq: equation tilde psi}, 
our expansion yields that, if \eqref{eq: condition delta} holds, then  $\widetilde \psi_\epsilon(s)$ satisfying Theorem~\ref{thm:mainVector}(i) can be simply chosen as a solution of the \emph{parallel transport} equation 
\begin{equation} \label{eq: parallel transport}
P_s \frac{d}{ds}\Omega(s)=0, \qquad  \Omega(0)=\psi_0,
\end{equation}
in which case it is naturally $\epsilon$-independent. Equivalently, $\Omega(s)$ is the solution of the  equation 
\begin{equation} \label{eq: parallel transport 2}
\iu\frac{d}{ds}\Omega(s) =\iu [P_s, \dot P_s]\Omega(s), \qquad  \Omega(0)=\psi_0,
\end{equation}
even though the generator $[P_s, \dot P_s]$ is not a local Hamiltonian.
Choices of $\widetilde \psi_\epsilon(s)$ that fulfill Theorem~\ref{thm:mainVector}(ii) can be constructed as well by continuing the expansion to higher orders and strenghtening \eqref{eq: condition delta}, see \cite{TeufelAd}.

\subsection{Examples}
\label{sec:examples}

We now discuss the result further via some examples, and we shall illustrate some of the points raised in the introduction.

\subsubsection{Non-interacting spins}\label{sec: non interacting spins}
{Take the interactions $\Phi_s$ to be such that $\Phi_s(X)=0$ whenever $|X|>1$ and write simply $h_{x,s}=\Phi_s(\{x\})$. Hence we have
$$
H_s=\sum_{x \in \Lambda} h_{x,s}.
$$
Then  the class of product vectors $\otimes_{x \in \Lambda} \psi_{x}$ is preserved {by the time evolution} and $\otimes_{x \in \Lambda} \psi_{x,\epsilon}(s)$ is a solution {of the many-body dynamics} provided that for each $x$, $\psi_{x,\epsilon}(s)$ solves the one-site Schr\"odinger equation
$$
\iu \epsilon \frac{d}{ds} \psi_{x,\epsilon}(s)= h_{x,s} \psi_{x,\epsilon}(s).
$$
Let us now assume that  $h_{x,s}$ has an eigenvalue $0$ at the bottom of the spectrum of $h_{x,s}$, and let $P_{x,s}$ be the associated spectral projection. As always, we assume also that the initial condition $\otimes_x \psi_{x,\epsilon}(0)$ satisfies \eqref{Initial condition}, i.e.\ $P_{x,0}\psi_{x,\epsilon}{(0)}=\psi_{x,\epsilon}(0)$ and  $\Vert \psi_{x,\epsilon}{(0)} \Vert =1 $. Writing $P_s = \otimes_{x \in \Lambda} P_{x,s}$, we have than
\be
\label{eq:nis1}
||(1-P_s) \otimes_{x \in \Lambda} \psi_{x{,\epsilon}}(s)||^2 = 1 - \prod_{x \in  \Lambda} || P_{x,s} \psi_{x{,\epsilon}}(s) ||^2. 
\ee
Assuming that the adiabatic theorem holds {in the form~\eqref{eq: result afgg}} at each site $x$ we get 
$$
\lim_{\varepsilon \to 0} ||(1-P_s) \otimes_{x \in \Lambda}  \psi_{x{,\epsilon}}(s)|| = 0.
$$
On the other hand it should be expected - and it is easy to find examples where it is the case -- that $\sup_x ||P_{x,s} \psi_{x,\epsilon}(s)|| \leq 1- c \epsilon$ for some $c >0$, i.e.\ the error term does not vanish completely.  In that case the product on the right hand side of \eqref{eq:nis1} vanishes as $(1-c\epsilon)^{2|\Lambda|}$ with increasing volume $|\Lambda|$, and we get
$$
\lim_{|\Lambda| \to \infty} ||(1-P_s) \otimes_{x \in \Lambda}  \psi_{x{,\epsilon}}(s)|| = 1.
$$
This shows that the adiabatic theorem cannot hold in the form~\eqref{eq: result afgg} with an error bound that is uniform in $\vert \Lambda\vert$. 
}

In the non-interacting context of this example however, Theorem~\ref{thm:mainVector} provides the natural adiabatic statement to be expected. Although a norm bound clearly fails, the solution is provided by probing the adiabatically evolved state only on local observables $O$, with the diabatic error depending in particular on the support of $O$. As should be expected the true difficulty is to deal with interactions, generating a dynamics that does not simply factorise as it did here. Controlling the propagation properties of the interacting dynamics will be crucial.

\subsubsection{Perturbations around gapped systems}
Concrete and --- unlike the above --- non-trivial examples for the applicability of the main theorem arise from perturbations of simple gapped spin systems. 

Indeed, the sole assumption of the main theorem that can not easily be checked by inspection is the presence of a spectral gap separating a spectral patch from the rest of the spectrum. The most natural situation is where the isolated patch lies at the bottom of the spectrum, in which case one can call this patch the ground state space. In many interesting cases the dimension of the ground state space remains bounded and the width of the ground state `band' shrinks to zero when volume grows. In fact, there is a not precisely formulated conjecture in condensed matter that `generic' local Hamiltonians have a single ground state separated by a gap, while it was recently proved that the problem of determining whether a given Hamiltonian is gapped or not is undecidable~\cite{Cubitt:2015ch}. Among the rigorous tools to prove gaps above the ground state space, the martingale method~\cite{Nachtergaele:1996vc} stands out. 

Furthermore, a spectral gap above the ground state energy can be proved for weak perturbations of certain classes of gapped Hamiltonians. Possible choices for the unperturbed $H_0$ include the non-interacting examples from Section \ref{sec: non interacting spins}, more interestingly spin systems with finitely correlated ground states in one dimension~\cite{yarotsky2004perturbations} such as the AKLT model~\cite{AKLT}, and more generally frustration-free, topologically ordered systems~\cite{Bravyi:2011ea,Michalakis:2013gh,Szehr:2015fn} such as the toric code model~\cite{Kitaev:2003ul}. The full Hamiltonian is then of the form
\begin{equation}\label{Perturbation}
H_s = H_0 + \alpha G_s,\qquad s\in[0,1],\alpha\in\bbR,
\end{equation}
with $G_s$ a local Hamiltonian satisfying our smoothness assumption, and $\vert \alpha \vert$ small. Hence, in these cases, all assumptions of Theorem~\ref{thm:mainVector} can be verified and our result describes the evolution of $\psi_\epsilon(s)$ that started at $s=0$ in the ground state space. 

Somewhat less generally, we also find interesting examples of the form~(\ref{Perturbation}) where the isolated spectral patch does not lie at the bottom of the spectrum. In~\cite{yarotsky2004perturbations}, where $H_0$ describes independent spins, and in~\cite{Bravyi:2011ea}, where $H_0$ can be taken to be the toric code model, it is proved that for $\vert \alpha \vert$ small enough the spectrum $\Sigma_s$ of $H_s$ is of the form
\begin{equation*}
\Sigma_s\subset  \bigcup_{ n \in \bbN }  B_{n},\qquad    B_{n}= \{ z: | z-E_n | \leq (C_0 + C_1 n) {\alpha } \},
\end{equation*}
where $E_n$ are the eigenvalues of $H_0$ and the constants $C_0,C_1>0$ are independent of the volume. If $\vert \alpha \vert$ is small enough, the low-lying bands $B_{0},B_{1},\ldots, B_{k}$ contain separated patches of spectrum. In particular, the band $B_{s,0}$ will contain the branches of eigenvalues arising from the possibly degenerate ground states of $H_0$. Furthermore, one may suspect that the $j$th band can be related to an effective $j$-particle subspace, see~\cite{yarotsky2004quasi} for weakly interacting spins and ~\cite{bachmann2016lieb} for a scattering picture. In this setting, Theorem~\ref{thm:mainVector} shows that the adiabatic evolution takes place mostly within the bands, with vanishing leaks between different bands --- as tested by local observables.

\subsubsection{Adiabatic dynamics within gapped ground state phases}

A gapped ground state phase of a quantum spin system is usually understood as a set of Hamiltonians defined on the same spin system, such that they all have a spectral gap above the ground state energy, and there is a piecewise $C^1$-path of gapped Hamiltonians interpolating between any pair, see for example~\cite{Sachdev:2000aa, Wen, automorphic, BachmannOgata}. It follows from the quasi-adiabatic flow technique described below that their ground states are locally unitarily equivalent, which is a structural result about the gapped phases. 

The adiabatic theorem proved here adds a dynamical aspect to the equivalence of states within gapped phases, in that it ensures that if a dynamics is started in the ground state of the initial Hamiltonian and the system is slowly driven along a gapped path to a Hamiltonian within the same phase, then the final state is a ground state of the final Hamiltonian, up to small diabatic errors. As a concrete example, this implies by~\cite{PVBS} that the coupling constants of the AKLT model, which is believed to belong to the phase of the antiferromagnetic spin-$1$ Heisenberg model, can slowly be changed so as to reach a product state with vanishing errors. Moreover, how slowly the process must be run to remain within a given error bound is independent of the length of the spin chain.

\subsection{The adiabatic evolution of the projector $P_0$}

Theorem~\ref{thm:mainVector} states that the driven Schr\"odinger evolution of a vector initially in the spectral patch  $\Sigma^1_0$ evolves within the spectral patch  $\Sigma^1_s$, up to small diabatic errors. Not surprisingly, the method of proof can be adapted to deal with the spectral projector associated with the complete patch. This version will be particularly suited to discuss the relation of our theorem to the `quasi-adiabatic evolution' results we discuss below.

For this, we consider the {von Neumann} equation for non-negative trace-class operators
\begin{equation}\label{SchDensity}
\iu \epsilon \frac{d}{ds} \rho_\epsilon(s) = [H_s,\rho_\epsilon(s)].
\end{equation}
The flow corresponding to~(\ref{SchDensity}) preserves positivity, preserves the trace, and if the initial operator is a projection, then so is the solution of the equation for all $s$. We shall consider the distinguished initial condition given by the spectral projection $P_0$, and denote $P_\epsilon(s)$ the solution of the equation. In other words, 
\begin{equation*}
P_\epsilon(s) = P_\epsilon(s)\str = P_\epsilon(s)^2
\end{equation*}
is the solution of
\begin{equation}\label{Schroedingerdensity}
\iu \epsilon \frac{d}{ds} P_\epsilon(s) = [H_s,P_\epsilon(s)],\qquad P_\epsilon(0) = P_0.
\end{equation}
The adiabatic theorem then suggests that $P_\epsilon(s)$ remains close to $P_s$ for $s\in[0,1]$, and indeed the following holds: Under the conditions of Theorem~\ref{thm:mainVector}(i), 
\begin{equation}\label{Adiabatic patch}
\sup_{s\in[0,1]}\left\vert \frac{\Tr (P_\epsilon(s) O)}{\Tr (P_\epsilon(s))} - \frac{\Tr (P_s O)}{\Tr (P_s)}\right\vert \leq C |\mathrm{supp}(O)|^2  \Vert O\Vert \epsilon
\end{equation}
uniformly in the volume $\Lambda$. This shows that the initial projector $P_0$ is parallel transported to $P_s$ by the adiabatic evolution, up to small diabatic errors. Note also that similarly to Theorem~\ref{thm:mainVector}(ii), the error becomes smaller than a higher power in $\epsilon$ once the driving has stopped, the power being bounded only by the degree of smoothness of the Hamiltonian.

\subsection{Connection to the quasi-adiabatic flow}\label{QAFlow}
Paraphrasing the above, one could view Theorem~\ref{thm:mainVector} as the statement that the flow $s\mapsto P_\epsilon(s)$ is closely related to the map $s\mapsto P_s$. They are indeed intimately connected and, as we shall see in the proofs, understanding the latter is crucial to understanding the former, but they should not be mistaken for one another.

A central element in the proof of the theorem is the fact that $s\mapsto P_s$ can be constructed as a flow that is generated by a local Hamiltonian --- although not, of course, the Hamiltonian $H_s$. Precisely, there exists a $K_s \in \caB_{\caS,\infty}$ such that
\be \label{eq: qa flow}
\dot P_s= \iu [K_s,P_s],
\ee
see Corollary~\ref{cor:quasi-ad} below which provides a shorter proof than { the original proof in}~\cite{automorphic,HastingsWen}. While $K_s$ is often referred to in the literature as the generator of the `quasi-adiabatic evolution' or even `adiabatic evolution', we shall call it here the \emph{generator of the spectral flow} to avoid any confusion.

\subsection{Comparison with the traditional quantum adiabatic theory}
In the usual adiabatic theory \cite{Teufel, AFGG}, the solution of the von Neumann equation is expanded in the powers of $\epsilon$, 
$$
\rho_\epsilon(s) = P_s + \sum_{j=1}^n \epsilon^j a_{j}(s)  + \epsilon^{n}R_{n, \epsilon}(s).
$$
With appropriate regularity assumptions and initial conditions, operators $a_{j}(s)$ can be determined from a recurrence relation, the coefficient $a_{j+1}(s)$ being a linear function of $a_{j}(s)$ and $\dot{a}_{j}(s)$. The remainder term $R_{n, \epsilon}(s)$ is then obtained by Duhamel's formula,  
$$
R_{n, \epsilon}(s) = -\int_0^s\sigma_{s,s'}(r_{n}(s'))\dd s', \quad r_{n}(s) :=\frac{d}{ds}\left[(1-\mathcal{P}_s)a_{n}(s)\right],
$$
where $\sigma_{s,s'}(\cdot)$ is the flow generated by~\eqref{SchDensity} and $\mathcal{P}_s$ is the ergodic projection of $L_s=-\iu[H_s,\cdot]$.  

If $a_{j}, r_{n}$ were local operators in some sense, Theorem~\ref{thm:mainVector} could be established from such an expansion. Let us consider consider the first term of the expansion, given by $a_1(s) = \iu L_s^{-1} ([K_s,P_s])$. As discussed in Proposition~\ref{prop:HInverse}, one can invert $L_s$ so that $a_1(s) = \iu [\Upsilon(s), P_s]$, where $\Upsilon(s)$ is a local Hamiltonian. Then we have 
$$
\Tr(O a_1(s)) = -\iu\Tr(P_s [\Upsilon(s), O]).
$$
for any local observable $O$. Even though the norm of $a_1(s)$ grows as the volume, the expression in the trace manifestly does not because $\Upsilon(s)$ enters only in the commutator with the observable.

One might hope --- as the authors initially did --- that it is possible to proceed with $a_2, \ldots, a_n$ and $r_n$ in a similar manner. Unfortunately, we were not able to do so. Even if all the terms could be expressed as nested commutators so that traces involving $a_1,\ldots,a_n$ would be bounded uniformly in the volume, the remainder term $R_{n,\varepsilon}$ would diverge as $\epsilon\to 0$, since $\sigma_{s,s'}$ spreads the support of the local observable over a distance $\epsilon^{-1}$.

{We were only able to circumvent this issue} by introducing a new way of expanding the solution which carefully respects the locality of the dynamics. This is explained in the following section.

\subsection{Main idea of the proof} \label{rem:Proof}
We construct a sequence of local Hamiltonians $\{A_\alpha:1\leq \alpha\leq n\}$ generating a sequence of \emph{local unitary dressing transformations} of order $n$
\begin{equation}\label{ExpAnsatz}
U_{n,\epsilon}(s):= \exp\left(\iu S_{n,\epsilon}(s)\right),\qquad S_{n,\epsilon}(s) = \sum_{\alpha=1}^n\epsilon^\alpha A_\alpha(s),
\end{equation}
which closely follows the Schr\"odinger propagator $U_\epsilon(s,0)$ in the sense that
\begin{equation*}
\iu\epsilon \frac{d}{ds}U_{n,\epsilon}(s) = (H_s + R_{n,\epsilon}(s))U_{n,\epsilon}(s),
\end{equation*}
where the \emph{counter-diabatic driving} $R_{n,\epsilon}(s)$ is a local Hamiltonian of order $\epsilon^{n+1}$. The local Hamiltonians $A_\alpha(s)$ are determined recursively, however, unlike the standard expansion, the operator $A_\alpha(s)$ is a polynomial function of all the previous operators and their derivatives.

Like in the wishful argument of the previous section, the difference between the dressed projection $\Pi_{n,\epsilon}(s) := U_{n,\epsilon}(s) P_sU_{n,\epsilon}(s)\str$ and the solution of the Schr\"odinger equation $\rho_\epsilon(s)$ is locally of order $\epsilon^{n+1-d}$. By construction $S_{n,\epsilon}(s)$ is of order $\epsilon$, so that the difference between the dressed ground state and the ground state itself is of order $\epsilon$. Hence, the theorem follows if $n$ can be chosen larger than $d$, which can be done if the Hamiltonian is smooth enough.

Furthermore, the Hamiltonians $A_\alpha(s)$ depending locally-in-time on $H_s$ and its derivatives, they vanish whenever the driving stops. At that point $S_{n,\epsilon}(s)$, which is generically of order $\epsilon$, in fact vanishes and the dressing transformation becomes trivial. This allows for the improved bound $\epsilon^m$ at times $s$ where the Hamiltonian has become again time-independent.

%%%%%%%%%%%%%%%%%%%%%%%%%%%%%%%%%%%%%%%%%%%%%%%%%
%%%%%%%%%%%%%%%%%%%%%%%%%%%%%%%%%%%%%%%%%%%%%%%%%

\section{A corollary: Linear response theory}\label{sec:LRT}

In this section we present the important theoretical application of Theorem~\ref{thm:mainVector} that was announced in the introduction, namely that the theorem allows for a proof of the validity of linear response theory in the case of gapped systems.

To set the stage, we introduce the key actors in the particular case of the integer quantum Hall effect. We consider a quantum Hall sample of an area $L^2$ in an external magnetic field $B$. When the density of electrons is $n = B/(2 \pi q)$ for some $q \in \mathbb{Z}$, the non-interacting system is gapped and the gap remains open also in the presence of interactions \cite{Giuliani:2016gn}. If such a system is in its ground state and the Fermi energy lies in the gap there is no net motion of electrons across the sample. When an electric field $E_\nu$ is applied, it results in a current $j_\nu$ in the perpendicular direction. It is observed that for weak driving the current density is proportional to the applied force, $j_\mu =  f_{\mu \nu} E_\nu$, where $f_{\mu \nu}$ are the so-called response coefficients, in this particular case the conductances. In the quantum Hall effect the matrix of response coefficients is as discussed off-diagonal and these non-zero elements are equal to $1/(2 \pi  q)$ for integer $q$. Beyond the simple linear relation between driving and current, linear response theory provides an explicit formula for the matrix $f$, first derived by Kubo~\cite{Kubo:1957cl}. The quantisation of the Hall conductance as defined by Kubo's formula has been established in various degrees of generality, and in particular including interactions \cite{HastingsMichalakis,Giuliani:2016gn}. However, the validity of the formula itself has not yet been proved from first principles, in particular not in a many-body situation. The formula was established for non-interacting Landau type Hamiltonians in \cite{Elgart:2004wd}, and in the presence of disorder in~\cite{LRLocal}.
Progress was also made for the response smoothed in frequency-domain, see e.g.~\cite{klein2007mott,Bru:2017aa} and the references therein, and for a strictly local driving in a thermal setting, see and \cite{AbouSalem:2005kr,Jaksic:2006fv}.

We adapt Kubo's framework of linear response theory to the setting of quantum spin systems. Let $H_{\mathrm{initial}}$ be the unperturbed Hamiltonian in the infinite past upon which a perturbation $V$ is slowly switched on to reach an order $\alpha$ at $t=0$. The full driven Hamiltonian has the form
\begin{equation*}
H_{\epsilon t,\alpha} = H_{\mathrm{initial}} + \ep{\epsilon t} \alpha V,\qquad t\in(-\infty,0].
\end{equation*}
The response coefficient is associated to a local observable $J$, typically a current. Let $P_\alpha$ be the projection on the ground state space of $H_{0,\alpha}$. Note that $P_0$ is the projection onto the ground state space of $H_{\mathrm{initial}}$. As in the previous section, $P_{\epsilon,\alpha}(t)$ shall denote the solution of the driven Schr\"{o}dinger equation generated by $H_{\epsilon t,\alpha}$ with an initial condition $P_{\epsilon,\alpha}(-\infty) = P_{0}$. In finite volume, these projections can be interpreted as density matrices. Choosing the perspective of states as functionals on observables, we define
$$
\omega_{\epsilon,\alpha;t}(O)= \frac{\Tr (O P_{\epsilon,\alpha}(t))}{\Tr P_{\epsilon,\alpha}(t)}, \qquad \omega_{\alpha}(O)= \frac{\Tr (O P_{\alpha})}{\Tr P_{\alpha}},
$$
for any local observable $O$, keeping in mind that all these objects are volume-dependent. Note that the denominators are constant in the parameters $\epsilon,\alpha,t$, but they generically change with the volume.

Formally, the response coefficient $f_{J,V} $ is then given as the linear in $\alpha$ response to the perturbation~$V$, in the adiabatic limit: 
 \begin{equation}
\label{LRdef}
\omega_{\epsilon,\alpha;0}(J) - \omega_{\alpha}(J) -\alpha f_{J,V} =  o(\alpha), \qquad \text{as $\alpha\to 0, \epsilon \to 0$}.
\end{equation}
What makes the claim of validity of linear response truly non-trivial is that the limit $\epsilon\to 0$ is taken first and that the bound $o(\alpha)$ is uniform in the volume. In particular, if $o(\alpha)$ were allowed to depend on the volume $\Lambda$, then the resulting claim may physically be meaningless, see the arguments in~\cite{Kampen:1971aa}.

Let us now see how the above setup connects to adiabatic theory. Since the time-dependence is slow in the $\epsilon\to 0$ limit, it is justified to approximate the evolved projection $P_{\epsilon,\alpha}(0)$ by the instantaneous ground state projection $P_{\alpha}$, and hence the evolved state $\omega_{\epsilon,\alpha;0}$ by the instantaneous $\omega_{\alpha}$. Moreover, the replacement can be done uniformly in the volume, since the observable $J$ is a fixed local observable. In order to obtain an expression for the response coefficient  $f_{J,V}$, we then simply have to expand $\omega_{\alpha} (J)$ in $\alpha$. As long as the gap remains open for the Hamiltonians, the first order term is obtained using the generator of the spectral flow $K_\alpha$, see~(\ref{eq: qa flow}). Using cyclicity of the trace at $\alpha=0$, we obtain 
$$
\omega_\alpha(J)=\omega_0(J) - \iu\alpha\omega_{0}\left( [K_{0},J]\right)+o(\alpha),
$$
thereby identifying $f_{J,V}= -\iu\omega_{0}\left( [K_{0},J]\right)$. The precise theorem we prove is
\begin{thm}\label{thm:LR}
Suppose that Assumptions~\ref{A:patch}, \ref{A:Hamiltonians} hold for the family of Hamiltonians $H_{\mathrm{initial}} + \sigma \alpha  V$, for some fixed $\alpha>0$ and $\sigma\in[0,1]$. Let $J \in \mathcal{A}^X$, with $X \subset \Lambda$, be  an observable, and let $f_{J,V}= -\iu\omega_{0}\left( [K_{0},J]\right)$. 
 Then the expression
\begin{equation}
\label{Kubo}
 \alpha^{-1}\left(\omega_{\epsilon,\alpha;0}(J) - \omega_{0}(J) -\alpha f_{J,V}\right)
\end{equation}
converges to $0$, uniformly in the volume $\Lambda$, as first $\epsilon\to 0$ and then $\alpha\to 0$.
\end{thm}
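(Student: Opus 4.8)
The plan is to decompose the error $\alpha^{-1}(\omega_{\epsilon,\alpha;0}(J)-\omega_0(J)-\alpha f_{J,V})$ into two pieces that are controlled by the two tools already established in the paper: an \emph{adiabatic piece} $\alpha^{-1}(\omega_{\epsilon,\alpha;0}(J)-\omega_\alpha(J))$ and a \emph{static piece} $\alpha^{-1}(\omega_\alpha(J)-\omega_0(J)-\alpha f_{J,V})$. For the adiabatic piece, I would first observe that the driven problem on $t\in(-\infty,0]$ with exponential switching $\ep{\epsilon t}\alpha V$ can be recast, after a change of variables, into the finite-interval setup of Theorem~\ref{thm:mainVector}: set $s = \ep{\epsilon t}$ (or a smoothed version thereof) so that $s$ ranges over $(0,1]$, with the Hamiltonian $H_s = H_{\mathrm{initial}} + s\alpha V$ depending smoothly on $s$, and all $s$-derivatives of the interaction vanishing at $s=0$ (the infinite past) because the switching function $\ep{\epsilon t}$ and all its derivatives vanish there. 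One subtlety is the genuinely infinite time interval versus the compact $[0,1]$ of the theorem; I would handle this by noting that the projector has essentially equilibrated by the time $s$ is of order $\epsilon$ — more carefully, one truncates at a time $t_0$ with $\ep{\epsilon t_0}$ small and estimates the contribution of $(-\infty,t_0]$ separately using that $H$ is nearly constant there and the gap controls the relaxation — so that the version of Theorem~\ref{thm:mainVector} for the projector flow, namely~\eqref{Adiabatic patch}, applies and gives
\begin{equation*}
\left\vert \omega_{\epsilon,\alpha;0}(J) - \omega_\alpha(J)\right\vert \leq C\,|X|^2\,\Vert J\Vert\,\epsilon,
\end{equation*}
uniformly in $\Lambda$. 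Dividing by $\alpha$ and sending $\epsilon\to 0$ first kills this term for each fixed $\alpha$; this is exactly why the order of limits matters.

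For the static piece, I would use the spectral-flow representation~\eqref{eq: qa flow}: for the family $H_\sigma = H_{\mathrm{initial}} + \sigma\alpha V$ with $\sigma\in[0,1]$, the ground state projection satisfies $\tfrac{d}{d\sigma}P_\sigma = \iu\alpha[K_\sigma^{(V)},P_\sigma]$ for an appropriately scaled local generator of the spectral flow (linear in $\alpha$ through the $V$-dependence of the interaction, so that $K_\sigma^{(V)} = \alpha K_\sigma$ with $K_\sigma$ bounded independently of $\alpha$ in the relevant local sense, or more precisely $K_\sigma$ depends on $\alpha$ but with uniformly controlled local norms by Assumption~\ref{A:Hamiltonians}). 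Then
\begin{equation*}
\omega_\alpha(J) - \omega_0(J) = \int_0^1 \frac{d}{d\sigma}\,\frac{\Tr(P_\sigma J)}{\Tr(P_\sigma)}\,\dd\sigma,
\end{equation*}
and using $\tfrac{d}{d\sigma}\Tr(P_\sigma) = 0$ together with cyclicity of the trace, the integrand equals $-\iu\alpha\,\omega_\sigma([K_\sigma, J])$, where crucially $[K_\sigma,J]$ has support that, by Lieb–Robinson type bounds built into $\caB_{\caS,\infty}$, is effectively localized near $X$ with a norm bounded by $C\,|X|\,\Vert J\Vert$ uniformly in $\Lambda$. Hence
\begin{equation*}
\alpha^{-1}\left(\omega_\alpha(J) - \omega_0(J)\right) = -\iu\int_0^1 \omega_\sigma([K_\sigma, J])\,\dd\sigma \xrightarrow{\alpha\to 0} -\iu\,\omega_0([K_0, J]) = f_{J,V},
\end{equation*}
the convergence following from continuity in $\alpha$ (hence in $\sigma\alpha$) of $\sigma\mapsto\omega_\sigma$ and of $\sigma\mapsto K_\sigma$ in the relevant local topology, which is guaranteed by the smooth dependence of the interactions on the parameter and the uniform gap. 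Putting the two pieces together: for fixed $\alpha$, $\epsilon\to 0$ eliminates the adiabatic piece uniformly in $\Lambda$, leaving the static piece, whose limit as $\alpha\to 0$ is zero by the displayed computation, again uniformly in $\Lambda$ since all bounds depend on $J$ only through $X$ and $\Vert J\Vert$.

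The main obstacle I anticipate is making the reduction of the half-line driven dynamics to the compact-interval adiabatic theorem fully rigorous while keeping all estimates uniform in the volume. The switching protocol $\ep{\epsilon t}$ does not literally bring the Hamiltonian to a time-independent form in finite time, so Assumption~\ref{A:Hamiltonians}'s requirement that derivatives vanish at an endpoint must be arranged by a careful cutoff argument, and one must show that the tail $t\in(-\infty,t_0]$ contributes negligibly — this uses that near the infinite past $H$ is within $O(\ep{\epsilon t_0}\alpha)$ of $H_{\mathrm{initial}}$, that the gap persists along the whole family by hypothesis, and a stability estimate for the (near-)stationary von Neumann evolution. A secondary but real technical point is the uniform-in-$\alpha$ control of the local norms of $K_\sigma$ and of the state $\omega_\sigma$: one needs the constants in the quasi-adiabatic construction (Corollary~\ref{cor:quasi-ad}) to depend only on the gap $\gamma$ and the locality constants of the interaction, both of which are uniform for $\sigma\in[0,1]$ and small $\alpha$ by hypothesis, so that the dominated-convergence step in sending $\alpha\to 0$ is legitimate. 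Once these locality-versus-uniformity bookkeeping issues are settled, the rest is a direct assembly of Theorem~\ref{thm:mainVector} (in its projector form) and the spectral-flow identity.
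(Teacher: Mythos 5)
Your overall architecture coincides with the paper's: split the error into the adiabatic piece $\omega_{\epsilon,\alpha;0}(J)-\omega_\alpha(J)$ and the static piece $\alpha^{-1}(\omega_\alpha(J)-\omega_0(J))+\iu\omega_0([K_0,J])$, and your treatment of the static piece is essentially the paper's argument (differentiate $\omega_\sigma$ along the coupling using Corollary~\ref{cor:quasi-ad}, then use the uniform-in-volume continuity coming from Proposition~\ref{prop:HInverse}(iv)). The genuine gap is in the adiabatic piece, i.e.\ in your reduction of the half-line switching to Theorem~\ref{thm:mainVector}/\eqref{Adiabatic patch}. The substitution $s=\ep{\epsilon t}$ does \emph{not} put the dynamics into the required form: since $ds/dt=\epsilon s$, the rescaled equation reads $\iu\epsilon\frac{d}{ds}\psi=s^{-1}\bigl(H_{\mathrm{initial}}+s\alpha V\bigr)\psi$, whose generator is singular as $s\downarrow 0$ and, even setting the singularity aside, has first $s$-derivative $\alpha V\neq 0$ at the initial endpoint, so Assumption~\ref{A:Hamiltonians} (smoothness with derivatives vanishing at the start) fails; your claim that ``all $s$-derivatives vanish at $s=0$'' confuses the $t$- and $s$-pictures.

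The cutoff you propose to repair this is exactly where the volume-uniformity is lost: on the tail $(-\infty,t_0]$ the perturbation $\ep{\epsilon t}\alpha V$ is small only in the interaction-norm sense, not in operator norm ($\Vert \ep{\epsilon t_0}\alpha V\Vert\sim\alpha\,\ep{\epsilon t_0}\,\vert\Lambda\vert$), so ``$H$ is nearly constant and the gap controls the relaxation'' is not a stability estimate you can invoke uniformly in $\Lambda$; controlling the tail requires the same locality-respecting machinery as the bulk. The paper avoids the reduction altogether: it keeps the rescaled time $s=\epsilon t\in(-\infty,0]$, reruns the counter-diabatic construction of Lemma~\ref{ConstructPhi} on the half-line, and uses that $H^{(k)}_{s,\alpha}=\ep{s}\alpha V$ for all $k\geq 1$, so the remainder estimates carry the integrable factor $\alpha\,\ep{s}$ and the Duhamel integral $\int_{-\infty}^{0}\ep{s'}\vert s'-s_0\vert^d\,ds'$ converges. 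This yields $\vert\omega_{\epsilon,\alpha;\sigma}(J)-\omega_\alpha(J)\vert\leq C\alpha\vert X\vert^2\Vert J\Vert\,\epsilon$ with the extra factor $\alpha$ (which is why the paper's convergence holds for any coupled limit, not just $\epsilon\to 0$ first). So while your bound without the factor $\alpha$ would still suffice for the stated iterated limit, the step producing it is not justified as written; to close the gap you should either redo the expansion directly on $(-\infty,0]$ as the paper does, or give a genuinely local (Lieb--Robinson based) tail estimate rather than a norm-perturbation one.
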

In fact, the convergence works for any coupled limit $(\alpha,\epsilon)\to (0,0)$ but we stressed the physical order of limits.
The expression for $f_{J,V}$ given above might not be familiar. A more recognizable expression that often appears in expositions of linear response is rather 
\begin{equation}\label{eq: standard expression kubo}
f_{J,V}= \iu \lim_{\delta \downarrow 0} \int_{0}^{\infty} dt \, e^{-\delta t}\omega_{0}\left( [\tau_{-t}(V),J]\right),
\end{equation} 
where $\tau_t(V)=e^{\iu t H_{\mathrm{initial}}} V e^{-\iu t H_{\mathrm{initial}}}$. We show in Section \ref{sec: proof of main theorem} that within our setup, this expression indeed coincides with $f_{J,V}$ as given in Theorem \ref{thm:LR}. 

Though it goes beyond the framework of this paper, it is worthwile to hint at an infinite-volume formulation. In this formalism, one considers observables $O$ in the \emph{quasilocal} algebra $\caA$, defined as the norm-closure of the inductive limit of $\caA^{\Lambda},\Lambda\nearrow \Gamma$. The set of states $\caS(\caA)$ consists then of positive, continuous  linear functionals on $\caA$, see e.g.~\cite{BratteliRobinsonBook} for precise definitions.   
Then, in the weak-* topology, the family of ground states $\{\omega_\alpha^\Lambda:\Lambda\in\caF(\Gamma)\}$ has accumulation points in $\caS(\caA)$.  Let us assume for simplicity that there is unique limit, namely that there is a state $\overline{\omega}_\alpha$ such that
\begin{equation*}
\lim_{\Lambda\nearrow \Gamma} \omega^{\Lambda}_\alpha(O) = \overline{\omega}_\alpha(O)
\end{equation*}
for all local $O$. In that case, it follows that the dynamically-defined states $\omega_{\epsilon,\alpha;t}$ have thermodynamic limits $\overline{\omega}_{\epsilon,\alpha;t}$ in the sense above, and the theorem can be reformulated as
\begin{equation}
\label{Kubo again}
\lim_{\alpha\to 0}\lim_{\epsilon\to 0}\alpha^{-1}\left(\overline{\omega}_{\epsilon,\alpha;0}(J) - \overline{\omega}_{0}(J)\right)
 =  -\iu\overline{\omega}_{0}\left( [K_{0},J]\right),
\end{equation}
where $[K_{0},J] = \lim_{\Lambda\to\Gamma}[K_{0}^\Lambda,J]$ with the limit meant in $\caA$.

Another extension can be easily obtained from the theorem in a translation-invariant setting, where the Hamiltonians and the states are all translation invariant. Then the theorem extends to the current density $j = \vert \Lambda \vert^{-1} J$ of an extensive current $J$ which is obtained as the sum of translates of the same local operator. It is in particular instructive to consider the response of the initial energy $H_{\mathrm{initial}}$ in this setting. The associated response coefficient $-i\omega_0([K_0, H_{\mathrm{initial}}])$ is zero in view of $[H_{\mathrm{initial}}, P_0]=0$. This is a manifestation of the fact that response of a gapped quantum system is always non-dissipative.

%%%%%%%%%%%%%%%%%%%%%%%%%%%%%%%%%%%%%%%%%%%%%%%%%
%%%%%%%%%%%%%%%%%%%%%%%%%%%%%%%%%%%%%%%%%%%%%%%%%

\section{Proofs}\label{sec:Proofs}

\subsection{Setup for locality} \label{sec: infinite volume setup}

As pointed out in the previous sections, the locality of the dynamics of a quantum spin system will play a key role in the proofs. Here, we detail the setup describing the locality properties of the lattice and of the Hamiltonians, and recall the central estimate in Section~\ref{sub:LRB}: the Lieb-Robinson bound. We shall further show that the set of local Hamiltonians is closed under two natural operations: taking a commutator and the inverse thereof. We treat these operations in Sections~\ref{sec: commutators} and~\ref{sec: the map i}.

\subsubsection{An integrable structure on $\Gamma$}
To the countable graph $\Gamma$, we associate the  function $F(r)=(1+r)^{-(d+1)} $ where $d$ is the dimensionality of $\Gamma$, as defined by \eqref{ddim}. This function will be assumed to be fixed throughout and we mostly omit it from the notation.   
Its important properties \cite{Nachtergaele:2006bh} are the existence of a constant $C_F<\infty$ such that
\begin{equation*}
\sum_{z\in\Gamma}F(d(x,z))F(d(z,y))\leq C_FF(d(x,y))
\end{equation*}
for all $x,y\in\Gamma$, and that
\begin{equation*}
\Vert F\Vert_1:= \sup_{x\in\Gamma} \sum_{z\in\Gamma}F(d(x,z))<\infty.
\end{equation*}
For any bounded, non-increasing, positive function $\zeta:[0,\infty)\to(0,\infty)$ that is logarithmically superadditive, namely $\zeta(r+s)\geq\zeta(r)\zeta(s)$, we let $F_\zeta(r):= \zeta(r) F(r)$. Then, $F_\zeta$ has the two properties above with $C_{F_\zeta}\leq C_F$.

\subsubsection{Interactions}
We already introduced the notion of an \emph{interaction} as a family of self-adjoint operators $\{\Phi(Z):Z \in \caF(\Gamma)\}$, possibly time-dependent $\Phi(Z)=\Phi_t(Z)$, stressing that such interactions allow to define a family of Hamiltonians $\{H^\Lambda: \Lambda\in \caF(\Gamma)\}$ and therefore of dynamics, indexed by finite volumes.

For the proofs below, it is natural to consider a slightly more general setup, namely that of a family of $\Lambda$-dependent interactions $\{\Phi^\Lambda:\Lambda\in\caF(\Gamma)\}$, such that $\Phi^\Lambda(X)=0$ unless $X \subset\Lambda$. These interactions are not necessarily compatible, i.e.\ possibly
$\Phi^\Lambda(X)\neq \Phi^{\Lambda'}(X)$, even if both sides are nonzero. We denote these families of interactions simply by $\Phi$ and we call this object an interaction as well. Some compatibility requirement could be added, for example to ensure the existence of a well-defined infinite volume limit of the associated dynamics, see e.g.\ Section~5 of~\cite{automorphic}, but this will not be necessary here.

 For $n\in\bbZ$ and $\zeta$ as above, we define a norm on (possibly time-dependent) interactions $\Phi$:
\begin{equation*}
\Vert \Phi\Vert_{\zeta,n}:= \sup_{\Lambda\in\caF(\Gamma)} \sup_{x,y\in\Gamma}\sum_{Z\ni\{x,y\}}\sup_{t\in\bbR}\vert Z\vert^n\frac{\Vert \Phi_t^\Lambda(Z)\Vert}{F_\zeta(d(x,y))}.
\end{equation*}
We let $\caB_{\zeta,n}$ stand for the Banach space of interactions completed in this norm, which depends on the function $\zeta$ and the power $n$ quantifying the locality. Note that $\caB_{\zeta,n}\subset\caB_{\zeta,m}$
whenever $n> m$.

We shall encounter two particular classes of functions $\zeta$ in the sequel. The first one are the exponentials, $\zeta(r) = \exp(-\mu r)$ for a $\mu>0$.  The second class is that of functions decreasing faster than any inverse power,
\begin{align*}
\caS := \big\{\zeta:[0,\infty)\to(0,\infty) : \:&\zeta \text{ is bounded, non-increasing, logarithmically superadditive and }\\ &\sup\{ r^n \zeta(r): r\in[0,\infty)\}<\infty\text{ for all }n\in\bbN\big\}.
\end{align*}
In the following, the exact decay rates of the exponential or of a function in $\caS$ will be irrelevant, while we will often need the power $n$ to be arbitrary large. Therefore, we define the classes of interactions
\begin{equation*}
\caB_{\caE,n}= \cup_{\mu>0}  \caB_{\exp(-\mu\cdot),n},\qquad \caB_{\caS,n} = \cup_{\zeta \in \caS}  \caB_{\zeta,n},
\end{equation*}
(note that $\caB_{\caE,n}\subset\caB_{\caS,n}$), and
\begin{equation*}
\caB_{\caE,\infty} = \cap_{n \in \bbN} \caB_{\caE,n},\qquad \caB_{\caS,\infty} = \cap_{n \in \bbN} \caB_{\caS,n},
\end{equation*}
which are the classes appearing in the main theorem.

\subsubsection{Local Hamiltonians}
An interaction allows us to build, for any finite volume $\Lambda$ and time $t$, an associated Hamiltonian by 
\begin{equation}\label{eq:HDecomp}
H^\Lambda_{t}=\sum_{Z \in \caF(\Gamma)} \Phi_t^\Lambda(Z)= \sum_{Z \subset \Lambda} \Phi_t^\Lambda(Z).
\end{equation}
We will often turn the logic around and treat the Hamiltonians as the central object. In particular, we shall say that a family of operators $\{H^\Lambda:\Lambda\in\caF(\Gamma)\}$ is a \emph{local Hamiltonian} if there exists an interaction $\Phi_H$ such that \eqref{eq:HDecomp} holds. We shall call $\Phi_H$ an `interaction associated to $H$', and we note that $\Phi_H$ is not unique because the decomposition~(\ref{eq:HDecomp}) is not unique. 
We denote by $\caL_{\caE,n}, \caL_{\caS,n}, \caL_{\caE,\infty}, \caL_{\caS,\infty}$, the set of local Hamiltonians $H$ for which there is an interaction $\Phi_H$ in $\caB_{\caE,n}, \caB_{\caS,n}, \caB_{\caE,\infty}, \caB_{\caS,\infty}$, respectively.

Finally, we shall denote $H\in\caC^k$ if the $j$th time derivative $H^{(j)}$ of a Hamiltonian (as a matrix-valued function) exists for all $1\leq j\leq k$. Note that even if $H$ belongs to one of the `nice classes' $\caL$, its derivatives do not necessarily do so. For the particular Hamiltonian defining the dynamics in this paper, Assumption~\ref{A:Hamiltonians} however precisely ensures that $H^{(j)}\in\caL_{\caE,\infty}$ for all $1\leq j\leq d+m$.

\subsubsection{Lieb-Robinson bounds}\label{sub:LRB}

A local Hamiltonian generates a dynamics $\tau^\Lambda_{t,t_0}$ on $\caA^\Lambda$, i.e.\ a cocycle of automorphisms $\caA^\Lambda\mapsto \caA^\Lambda$ by 
$$
-\iu \frac{d}{dt}\tau^\Lambda_{t,t_0}(O) = [H^\Lambda_{t}, \tau^\Lambda_{t,t_0}(O) ], \qquad   \tau^\Lambda_{t_0,t_0}(O)=O.
$$
This dynamics satisfies a Lieb-Robinson bound \cite{Lieb:1972ts,Nachtergaele:2006bh}: If $\Phi_H\in\caB_{\zeta,n}$ for some $n\in\bbN\cup\{0\}$ and $\zeta\in\caS$, there is a constant $C_{\zeta}$ such that
\begin{equation}\label{LR bound}
\left\Vert [\tau^\Lambda_{t,t_0}(O^X),O^Y]\right\Vert\leq \frac{2\Vert O^X\Vert \Vert O^Y\Vert}{C_{\zeta}} \ep{2 C_{\zeta} \Vert \Phi \Vert_{\zeta,0}\vert t-t_0 \vert}\sum_{x\in X,y\in Y} F_\zeta(d(x,y))
\end{equation}
for all $O^X\in\caA^X, O^Y\in\caA^Y$, with $X,Y \subset \Lambda$. If $d(X,Y)>0$, then
\begin{equation*}
\sum_{x\in X,y\in Y} F_\zeta(d(x,y)) \leq \Vert F \Vert_1 \min\{\vert X\vert ,\vert Y\vert\}\zeta(d(X,Y)),
\end{equation*}
so that $\zeta$ expresses the decay of the commutator for times that are short compared to $d(X,Y)$. In particular, in the case $\zeta(r) = \exp(-\mu r)$, we define the Lieb-Robinson velocity
\begin{equation}\label{LRv}
v := \frac{2 C_{\zeta} \Vert \Phi \Vert_{\zeta,0}}{\mu}.
\end{equation}

\subsubsection{Commutators} \label{sec: commutators}
 Let $H,G\in\caL_{\caS,n}$ be two local Hamiltonians, with associated interactions $\Phi_H,\Phi_G$. We define their commutator as the local Hamiltonian $J$ with
\begin{equation*}
J^\Lambda= [H^\Lambda, G^\Lambda],\qquad \Lambda\in\caF(\Gamma),
\end{equation*}
and we check that $J \in\caL_{\caS,n-1}$. Indeed, a family of interactions $\Phi_J^\Lambda$ for $J$ is given by
\begin{equation}\label{Interaction Commutator}
\Phi_J^\Lambda(Z) = \sum_{\substack{X,Y\subset\Lambda: \\ X\cup Y = Z, X\cap Y\neq \emptyset}}[\Phi_H^\Lambda(X), \Phi_G^\Lambda(Y)],
\end{equation}
and Lemma~\ref{lem:IntComm}(ii) shows that indeed $\Vert \Phi_J \Vert_{\caS,n-1}<\infty$.

\subsubsection{The map $\caI_s$} \label{sec: the map i}

We now introduce a tool which plays an central role in the proof of the main theorem. For this, we consider only Hamiltonians
$H_s \in \caL_{\caE,\infty}$ featuring in our main Theorem \ref{thm:mainVector}: they are associated with an interaction decaying exponentially and are gapped with a gap $\gamma$, see Assumption~\ref{A:patch}. First, let $\tau^{s,\Lambda}_t$ be the dynamics generated by this local Hamiltonian, but with $s$ frozen, i.e.\ we consider here a time-independent Hamiltonian and therefore $\tau^{s,\Lambda}_t$ carries only one time-subscript. It is a group and not merely a cocycle. Second, let $W_\gamma\in L^1(\bbR)$ be a function such that $\sup\{\vert t \vert^{n} \vert W_\gamma(t)\vert:\vert t \vert >1\}<\infty$ for all $n\in\bbN$, and such that its Fourier transform satisfies
\begin{equation*}
\widehat W_\gamma(\zeta) = \frac{-\iu}{\sqrt{2\pi}\zeta},\quad\text{if}\quad \vert \zeta\vert \geq \gamma.
\end{equation*}
For an example of such a function, see \cite{automorphic}.

With this, a map $\caI^\Lambda_s: \caA^\Lambda\to\caA^\Lambda$ is defined by
\begin{equation}\label{I}
 \caI^\Lambda_s (A) := \int_\bbR W_\gamma(t)\tau^{s,\Lambda}_t(A)  \dd t.
\end{equation}
Now, we choose $A=G^\Lambda$, the finite-volume version of a local Hamiltonian $G\in\caL_{\caS,n}$ with associated interaction $\Phi_G$, and we define $\caI_s$ to be map on local Hamiltonians:
\begin{equation*}
\caI_s(G) := \{\caI_s^\Lambda(G^\Lambda) : \Lambda\in\caF(\Gamma))\},
\end{equation*}
Lemma~\ref{lem:IofPhi}(i) shows that $\caI_s(G) \in\caL_{\caS,n-1}$.

As an application of this construction, one can choose $G=\dot H_s$, with $H_s$ the same Hamiltonian at frozen time $s$ as the one generating the dynamics $\tau^{s,\Lambda}_t$. Then 
\begin{equation*}
K_s = \caI_s(\dot H_s),
\end{equation*}
is the generator of the spectral flow~\cite{Hastings:2004go,HastingsWen,Osborne,automorphic} already mentioned, see Corollary~\ref{cor:quasi-ad}.

%%%%%%%%%%%%%%%%%%%%

\subsubsection{Conventions}
In what follows, we drop as much as possible the superscripts $\Lambda$ on $H^\Lambda$ and $\Phi^\Lambda$.  We do this to make the notation lighter and because we believe that confusion is mostly excluded.  As a matter of fact, one could ignore the setup of families of Hamiltonians and interactions completely and simply imagine that we do all of our analysis in a fixed, large volume. With this picture in mind, the upshot of the results is that all bounds do not depend on the volume. 

Furthermore, from now on the notation $H$ will no longer refer to a general local Hamiltonian, but to the specific Hamiltonian that features in our main Theorem~\ref{thm:mainVector} and that satisfies Assumptions~\ref{A:patch} and~\ref{A:Hamiltonians}. 

Finally, we will also drop the subscript $s$ since all objects but for the `test observables' $O$ depend on $s$. As already used above, we usually denote $\dot A = \frac{d}{ds}A$.

\subsection{Proof of the main theorem} \label{sec: proof of main theorem}

We first gather some remarkable properties of the map $\caI$, which can be summarised as follows: under suitable conditions, $\caI(\cdot)$ provides a local inverse of $[H,\cdot]$. This will be repeatedly used in the proof of Theorem~\ref{thm:mainVector}.
\begin{prop}\label{prop:HInverse}
Let $H$ be a local Hamiltonian satisfying Assumptions~\ref{A:patch} and~\ref{A:Hamiltonians}. Recall that $P$ denotes the spectral projection onto the spectral patch $\Sigma^1$.
\begin{enumerate}
\item If an operator $A$ satisfies the \emph{off-diagonal condition}
\be  \label{eq: offdiagonal condition}
A = P A(1-P)+ (1-P) AP,   
\ee
then 
\begin{equation*}
A = -\iu[H, \caI(A) ].
\end{equation*}
\item For any operator $L$,
\begin{equation*}
[L,P] -\iu [[\caI(L),H],P] =0.
\end{equation*}
\item If $H\in\caL_{\caE,\infty}$ and $G\in\caL_{\caS,\infty}$, then $\caI(G)\in\caL_{\caS,\infty}$
\item If $H,G$ depend on a parameter with $H,G\in \caC^k$ and $H^{(j)}\in\caL_{\caE,\infty},G^{(j)}\in\caL_{\caS,\infty}$ for $0\leq j\leq k$, then $\caI(G) \in \caC^{k}$ with $\caI(G)^{(j)}\in\caL_{\caS,\infty}$ for $1\leq j\leq k$.
\end{enumerate}
\end{prop}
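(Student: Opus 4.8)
The plan is to establish the four items of Proposition~\ref{prop:HInverse} in a logical order that lets the later parts build on the earlier ones, using the defining property of $\caI$ as an integral against the kernel $W_\gamma$ together with the Lieb--Robinson machinery and the commutator/inverse lemmas referenced in the excerpt (Lemma~\ref{lem:IntComm}, Lemma~\ref{lem:IofPhi}, and their consequences).

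\textbf{Items (i) and (ii): the algebraic identities.} I would begin with the purely spectral computation. Write $H=\sum_j E_j Q_j$ in spectral form (in finite volume), where $Q_j$ are spectral projections and the labels are split so that $\{E_j : j \in \mathcal J_1\}$ are the eigenvalues in $\Sigma^1$. For an operator $A$ one computes $\tau^s_t(A) = \mathrm{e}^{\iu tH} A \mathrm{e}^{-\iu t H}$, so that $Q_j \caI(A) Q_k = \widehat W_\gamma\!\big(\tfrac{1}{\sqrt{2\pi}}(E_k-E_j)\big)\cdot\sqrt{2\pi}\,\cdot Q_j A Q_k$ up to the normalisation; the key point is that $\int_\bbR W_\gamma(t)\mathrm{e}^{\iu t \omega}\dd t$ equals $-\iu/\omega$ whenever $|\omega|\geq\gamma$, which holds for every pair $(E_j,E_k)$ with one index in $\mathcal J_1$ and the other not, by the Gap Assumption. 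Hence for $A$ satisfying the off-diagonal condition \eqref{eq: offdiagonal condition}, every surviving block of $\caI(A)$ carries exactly the factor $-\iu/(E_k-E_j)$, and applying $-\iu[H,\cdot]$ multiplies the $(j,k)$ block by $-\iu(E_j-E_k)$, returning $A$. This proves (i). For (ii), decompose an arbitrary $L$ as $L_{\mathrm{OD}} + L_{\mathrm D}$ into its off-diagonal and block-diagonal parts with respect to $P$; since $[L_{\mathrm D},P]=0$ and $\caI$ commutes with the block decomposition (because $\tau^s_t$ does), one checks $[[\caI(L),H],P] = [[\caI(L_{\mathrm{OD}}),H],P]$, and then $[\caI(L_{\mathrm{OD}}),H] = -[H,\caI(L_{\mathrm{OD}})] = -\iu L_{\mathrm{OD}}$ by (i), giving $-\iu[[\caI(L),H],P] = -[L_{\mathrm{OD}},P] = -[L,P]$, i.e.\ the claimed identity. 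These steps are routine once the Fourier identity for $W_\gamma$ is in hand.

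\textbf{Items (iii) and (iv): the locality and smoothness statements.} Part (iii) is where the real work lies, but it is essentially a citation of the locality lemmas: $\caI(G)$ is defined through the integral \eqref{I}, and the point is that $\tau^s_t$ generated by $H\in\caL_{\caE,\infty}$ satisfies a Lieb--Robinson bound \eqref{LR bound}, so applying it to $G\in\caL_{\caS,n}$ (with its interaction $\Phi_G$) and integrating against $W_\gamma$ — which decays faster than any polynomial at infinity — reproduces an interaction whose decay is controlled by the convolution of a stretched exponential with a fast-decaying function, landing in $\caB_{\caS,\infty}$. This is precisely the content of Lemma~\ref{lem:IofPhi}(i) applied for every $n$; one just has to note that taking the intersection over $n$ is harmless since the decay function $\zeta\in\caS$ produced by the estimate does not depend on $n$. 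For part (iv), differentiability in the parameter: one differentiates \eqref{I} under the integral sign. The derivative of $\tau^s_t(G)$ picks up two contributions — one from $\dot G$ and one from the $s$-dependence of the propagator $\tau^s_t$, which by Duhamel equals $\iu\int_0^t \tau^s_u([\dot H, \tau^s_{t-u}(G)])\,\dd u$ or equivalently $\iu[\int_0^t \tau^s_u(\dot H)\dd u\,,\,\cdot]$ conjugated appropriately. Iterating, the $k$-th derivative is a finite sum of nested integrals of nested commutators of $\tau^s$-evolved copies of $G, \dot H, \ddot H,\dots$; each such term is again of the form "integral against an $L^1$, fast-decaying kernel of a local object", and stays in $\caL_{\caS,\infty}$ by combining the commutator Lemma~\ref{lem:IntComm} with the Lieb--Robinson estimate exactly as in (iii). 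Justifying the differentiation under the integral needs a dominating function, which is supplied by the assumed $\caC^k$ bounds on $H,G$ together with the integrability of $W_\gamma$ and its (polynomially weighted) tails.

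\textbf{Main obstacle.} The genuine difficulty is part (iii)/(iv): controlling that $\caI$ maps local Hamiltonians to local Hamiltonians with only a mild degradation of the decay rate, and that this survives differentiation. If the lemmas Lemma~\ref{lem:IofPhi} and Lemma~\ref{lem:IntComm} are invoked as black boxes the proof is short, but the substance of the proposition — and the reason $\caI$ is the right tool — is exactly this locality-preservation, which rests on the interplay between the finite propagation speed encoded in \eqref{LR bound} and the superpolynomial decay of the integration kernel $W_\gamma$. The algebraic identities (i) and (ii), by contrast, are elementary spectral calculus and should be dispatched quickly.
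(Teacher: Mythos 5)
Your proposal is correct and follows essentially the same route as the paper: item (i) by spectral calculus with the Fourier property $\widehat W_\gamma(\zeta)=-\iu/(\sqrt{2\pi}\zeta)$ on $\vert\zeta\vert\geq\gamma$, item (iii) by invoking Lemma~\ref{lem:IofPhi}(i), and item (iv) by differentiating under the integral via Duhamel (the paper's formula~(\ref{diff of I})) and re-running the locality lemmas. Your item (ii) differs only cosmetically --- you split $L$ into block-diagonal and off-diagonal parts and apply (i) to $L_{\mathrm{OD}}$, whereas the paper applies (i) to the off-diagonal operator $[L,P]$ and uses $[H,P]=0$ with the Jacobi identity --- both reductions are the same underlying algebra.
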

\begin{proof}
(i) If $H = \int \lambda \dd E(\lambda)$ denotes the spectral decomposition of $H$, we have that
\begin{equation*}
-\iu [H, \caI(A)] = \sqrt{2\pi} \int \iu\widehat{W}_\gamma(\mu-\lambda) (\mu-\lambda) \dd E(\lambda) A  \dd E(\mu).
\end{equation*}
which proves the claim since $A$ is off-diagonal and $\widehat W_\gamma(\xi)= \frac{-\iu}{\sqrt{2\pi}\xi}$ for $ | \xi | \geq \gamma$. \\ 
\noindent (ii) For any $L$, the operator $A= [L,P] $ is off-diagonal in the sense of \eqref{eq: offdiagonal condition}. Therefore, by~(i), 
$$
[L,P] +\iu [H, \caI([L,P])]=0.
$$
Since $[H,P]=0$, we have $\caI([L,P])=[\caI(L),P]$ and hence 
$$
[L,P] -\iu [P,[H,\caI(L)]]=0
$$
by the Jacobi identity and again $[H,P]=0$. \\
\noindent (iii) The fact that $ \caI(\Phi) \in\caL_{\caS,\infty}$ is a special case of Lemma~\ref{lem:IofPhi}(i) below. \\
\noindent (iv) We first note that
\begin{equation}\label{diff of I}
\frac{d}{ds}\caI(G) = \caI(\dot G) + \iu \int_\bbR {W_\gamma(t)\tau_t\left(\left[\int_0^t\tau_{-u}(\dot H)\dd u,G\right]\right)}\dd t,
\end{equation}
which proves that $\caI(G)\in\caC^{1}$. By Lemma~\ref{lem:IofPhi}(i) and the assumption, $ \caI(\dot G) \in\caL_{\caS,\infty}$. Moreover, the assumption and the Lieb-Robinson bound imply that $\int_0^1\tau_\alpha(\dot H)\dd\alpha\in\caL_{\caE,\infty}$, so that the commutator belongs to $\caL_{\caS,\infty}$ by Lemma~\ref{lem:IntComm}(iii). The argument of Lemma~\ref{lem:IofPhi}(i) applies and proves that the second term above belongs to $\caL_{\caS,\infty}$ as well. This proves (iv) for $j=1$.

Furthermore, $\caI(G)^{(j)}$ similarly depends on $H^{(j')},G^{(j')}$ for $0\leq j'\leq j$, and hence $\caI(G)\in\caC^k$ if $H,G\in\caC^k$. The fact that $\caI(G)^{(j)}\in\caL_{\caS,\infty}$ follows again from a combination of the Lieb-Robinson bound, Lemma~\ref{lem:IntComm} and the proof of Lemma~\ref{lem:IofPhi}(i).
\end{proof}
As mentioned above, this proposition provides an elementary proof of the fact that $K=\caI(\dot H)$ is a local generator of the spectral flow.
\begin{cor}\label{cor:quasi-ad}
If Assumptions~\ref{A:patch} and~\ref{A:Hamiltonians} hold, then
\begin{equation*}
\dot P = \iu [\caI(\dot H), P]\,.
\end{equation*}
\end{cor}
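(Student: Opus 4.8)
The plan is to differentiate the spectral projection $P = P_s$ directly and identify the result as a commutator with $K = \caI(\dot H)$, exploiting Proposition~\ref{prop:HInverse}(ii) with the natural choice $L = \dot H$. First I would record the elementary fact that since $P_s$ is the spectral projection of $H_s$ onto the isolated patch $\Sigma^1_s$, standard perturbation theory (Assumption~\ref{A:Hamiltonians} guarantees the required smoothness of $s\mapsto H_s$, hence of $s\mapsto P_s$) gives that $\dot P$ is off-diagonal with respect to $P$, i.e.\ $\dot P = P\dot P(1-P) + (1-P)\dot P P$; this follows from differentiating the idempotent relation $P^2 = P$, which yields $\dot P = \dot P P + P \dot P$ and hence $P\dot P P = 0$.

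Next I would express $\dot P$ in terms of $\dot H$. Differentiating $[H,P] = 0$ gives $[\dot H, P] + [H,\dot P] = 0$, i.e.\ $[H,\dot P] = -[\dot H,P] = [P,\dot H]$. Since $\dot P$ is off-diagonal, Proposition~\ref{prop:HInverse}(i) applies with $A = \dot P$, giving $\dot P = -\iu[H,\caI(\dot P)]$. On the other hand, applying Proposition~\ref{prop:HInverse}(ii) with $L = \dot H$ gives
\begin{equation*}
[\dot H, P] - \iu[[\caI(\dot H), H], P] = 0,
\end{equation*}
so that $[\dot H, P] = \iu[[\caI(\dot H),H],P]$. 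Using $[H,P]=0$ and the Jacobi identity one rewrites $[[\caI(\dot H),H],P] = [\caI(\dot H),[H,P]] + [[\caI(\dot H),P],H]$, but $[H,P]=0$ kills the first term, leaving $[\dot H,P] = \iu[[\caI(\dot H),P],H] = -\iu[H,[\caI(\dot H),P]]$. Comparing with $[H,\dot P] = [P,\dot H] = -[\dot H,P] = \iu[H,[\caI(\dot H),P]]$, we get $[H, \dot P - \iu[\caI(\dot H),P]] = 0$.

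To conclude, I must promote this last identity from ``commutes with $H$'' to genuine equality. Both $\dot P$ and $\iu[\caI(\dot H),P]$ are off-diagonal with respect to $P$: for $\dot P$ this was just established; for $\iu[\caI(\dot H),P]$ it is immediate since any commutator with $P$ is off-diagonal. An operator $A$ that is off-diagonal and commutes with $H$ must vanish: decompose $A = P A (1-P) + (1-P)A P$; then $[H,A] = 0$ combined with $HP = PH$ forces, on each spectral component, $A$ to map between orthogonal spectral subspaces while commuting with $H$, which (since the patch $\Sigma^1$ is separated by the gap $\gamma$ from $\Sigma^2$) is only possible if $A = 0$. More directly, one can apply $\caI$ once more: $\caI(A) = -\iu \cdot 0$-related expression, or simply note $A = -\iu[H,\caI(A)]$ by part~(i) while $[H,A]=0$ propagates through. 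The cleanest route is: apply Proposition~\ref{prop:HInverse}(i) to $A := \dot P - \iu[\caI(\dot H),P]$ (off-diagonal) to write $A = -\iu[H,\caI(A)]$, then use that $[H,A]=0$ together with the spectral calculus definition of $\caI$ to see $\caI(A)$ commutes with $H$ as well, forcing $A = -\iu[H,\caI(A)]=0$. The main obstacle is precisely this final rigidity step — ensuring that ``off-diagonal and gap-commuting implies zero'' is handled cleanly rather than hand-waved — but it is a routine consequence of the spectral theorem given the uniform gap in Assumption~\ref{A:patch}. This yields $\dot P = \iu[\caI(\dot H), P]$, as claimed.
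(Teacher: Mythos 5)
Your proposal is correct, but it takes a different route from the paper. The paper's proof is a one-line chain: since $\dot P$ is off-diagonal, Proposition~\ref{prop:HInverse}(i) gives $\dot P = -\iu[H,\caI(\dot P)]$, and then one pulls the commutators through $\caI$ --- using that $\caI$ is built from the flow $\tau_t$ generated by $H$ itself, so $\caI([H,\cdot])=[H,\caI(\cdot)]$ and $\caI([\dot H,P])=[\caI(\dot H),P]$ because $\tau_t(P)=P$ --- together with $[H,\dot P]=-[\dot H,P]$, arriving directly at $\dot P=\iu[\caI(\dot H),P]$. You instead apply Proposition~\ref{prop:HInverse}(ii) with $L=\dot H$, use the Jacobi identity and $[H,P]=0$ to conclude that $\dot P$ and $\iu[\caI(\dot H),P]$ have the same commutator with $H$, and then close the argument with a rigidity step: an off-diagonal operator commuting with $H$ vanishes. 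That step is fine as you finally phrase it (apply Proposition~\ref{prop:HInverse}(i) to $A=\dot P-\iu[\caI(\dot H),P]$; since $[H,A]=0$ one has $\tau_t(A)=A$, hence $\caI(A)$ is a multiple of $A$ and $[H,\caI(A)]=0$, forcing $A=0$; alternatively the spectral theorem with the disjointness of $\Sigma^1,\Sigma^2$ does it directly), and your observations that $\dot P$ and any commutator with $P$ are off-diagonal are correct. What the paper's route buys is brevity and no need for an injectivity argument; what your route buys is that it isolates the uniqueness structure ($\adjoint_H$ is injective on off-diagonal operators) which is conceptually behind why $\caI$ acts as an inverse at all. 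One small remark: your intermediate identity $\dot P=-\iu[H,\caI(\dot P)]$ is stated but never used in the rest of your argument, so it can be dropped.
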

\begin{proof}
Since $\dot P$ is off-diagonal in the sense of \eqref{eq: offdiagonal condition}, Proposition~\ref{prop:HInverse}(ii) implies that
\begin{equation*}
\dot P = -\iu[H,\caI(\dot P)] = -\iu \caI([H,\dot P]) = \iu \caI([\dot H, P]) = \iu [\caI(\dot H), P],
\end{equation*}
where we used repeatedly that $[H,P]=0$.
\end{proof}
Although this does not play a role here, it is natural in the adiabatic setting to consider the block decomposition of the generator $K_s$ with respect to the projection $P_s$. In general, there is nothing particular to mention, and in particular, $(1-P_s)K_s(1-P_s)\neq 0$. However, if $P_s$ corresponds to an exactly degenerate eigenvalue, $P_s H_s P_s = E_s P_s$ for some $E_s\in\bbR$, then
\begin{equation*}
P_s K_s P_s = P_s \dot H_s P_s \int_{\bbR} W_\gamma(t) \dd t,
\end{equation*}
so that $K_s$ is trivial on $\mathrm{Ran}P_s$ if $W_\gamma$ is an odd function. In that particular case, the equation $\iu\frac{d}{ds}\Omega(s)=K_s\Omega(s),\Omega(0) = \psi_0$ has the same solution as the parallel transport equation~(\ref{eq: parallel transport 2}).

We can now turn to the heart of the proof, namely the argument sketched in Remark~\ref{rem:Proof}. In the following lemma, we construct the counter-diabatic driving and the local dressing transformation.
\begin{lemma}\label{ConstructPhi}
Let Assumptions~\ref{A:patch},\ref{A:Hamiltonians} hold for some $m\in\bbN$. For any $n\leq \tilde m= d+m$, there are $\{A_\alpha,1\leq \alpha\leq n\}$ with $A_\alpha\in\caL_{\caS,\infty}$ such that the projector
\begin{equation*}
\label{eq: ansatz pi}
\Pi_{n,\epsilon} :=  U_{n,\epsilon} P U_{n,\epsilon}\str,   \qquad \text{with} \quad U_{n,\epsilon}=  \exp\left(\iu \sum_{\alpha=1}^n\epsilon^\alpha A_\alpha\right)
\end{equation*}
 solves 
\begin{equation}\label{PiDynamics}
\iu \epsilon \dot \Pi_{n,\epsilon} = [H + R_{n,\epsilon},\Pi_{n,\epsilon}]
\end{equation}
where $R_{n,\epsilon}\in\caL_{\caS,\infty}$ with associated potential $\Phi_{n,\epsilon}$ satisfying $\Vert\Phi_{n,\epsilon} \Vert_{\caS,k} \leq r_{n,k}(s)\epsilon^{n+1}$ for all $k\in\bbN$, where $r_{n,k}(s)$ is independent of $\epsilon$. \\ 
Moreover, $A_\alpha\in\caC^{(\tilde m-\alpha)}$, with $A^{(j)}_\alpha\in \caL_{\caS,\infty}$ for all $1\leq j\leq \tilde m-\alpha$.

\end{lemma}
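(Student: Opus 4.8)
The plan is to construct the $A_\alpha$ recursively in $\alpha$, by matching powers of $\epsilon$ in the conjugated Schr\"odinger equation, and using Proposition~\ref{prop:HInverse} to invert $[H,\cdot]$ at each step while staying inside the local classes $\caL_{\caS,\infty}$. First I would compute, by Duhamel, the equation satisfied by $\Pi_{n,\epsilon} = U_{n,\epsilon} P U_{n,\epsilon}\str$. Writing $S=S_{n,\epsilon}=\sum_{\alpha=1}^n \epsilon^\alpha A_\alpha$ and differentiating, one gets $\iu\epsilon\dot\Pi_{n,\epsilon} = [B_{n,\epsilon},\Pi_{n,\epsilon}]$ where $B_{n,\epsilon} = \iu\epsilon\,(\tfrac{d}{ds}U_{n,\epsilon})U_{n,\epsilon}\str$; the goal is to show $B_{n,\epsilon} = H + R_{n,\epsilon}$ with $R_{n,\epsilon}$ local and $O(\epsilon^{n+1})$ in every $\caB_{\caS,k}$-norm, provided the $A_\alpha$ are chosen correctly. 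Using the standard expansion $(\tfrac{d}{ds}e^{\iu S})e^{-\iu S} = \iu\int_0^1 e^{\iu\sigma S}\dot S\,e^{-\iu\sigma S}\dd\sigma = \iu\sum_{k\ge0}\tfrac{(\iu)^k}{(k+1)!}\ad{S}^k(\dot S)$, one obtains that $B_{n,\epsilon}$ is a power series in $\epsilon$ whose coefficients are nested commutators of the $A_\beta$'s and $\dot A_\beta$'s. The same manipulation also has to be applied to the ``free'' term: one wants $B_{n,\epsilon}$ to equal $H$ plus higher order, so really the object to expand is $B_{n,\epsilon} - H$, and here it is cleaner to conjugate $H$ as well, i.e.\ to track $e^{-\iu S}(H+R)e^{\iu S}$ against $\iu\epsilon e^{-\iu S}\tfrac{d}{ds}e^{\iu S}$ and match orders; but either bookkeeping works.

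Concretely I would set up the recursion as follows. Collect the coefficient of $\epsilon^\alpha$ in the equation $B_{n,\epsilon}-H-R_{n,\epsilon}=0$; at order $\epsilon^\alpha$ (for $1\le\alpha\le n$) this reads, schematically,
\begin{equation*}
\iu[H,A_\alpha] + \iu\dot A_{\alpha-1} + F_\alpha\bigl(A_1,\dots,A_{\alpha-1},\dot A_1,\dots,\dot A_{\alpha-1}\bigr) = 0,
\end{equation*}
where $F_\alpha$ is a universal polynomial in nested commutators of the lower-order $A_\beta$ and their first derivatives (with $A_0:=0$, and $F_1 = \dot H$ coming from the $s$-derivative hitting $H$). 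The requirement is then that $[H,A_\alpha]$ be chosen to cancel the \emph{off-diagonal} part (w.r.t.\ $P$) of $\iu\dot A_{\alpha-1}+F_\alpha$, which is possible: given the lower-order data, define $A_\alpha := \caI\bigl(\iu\dot A_{\alpha-1}+F_\alpha\bigr)$ and apply Proposition~\ref{prop:HInverse}(i) to the off-diagonal part. The \emph{diagonal} (block) part of $\iu\dot A_{\alpha-1}+F_\alpha$ cannot be cancelled by a commutator with $H$, so it is simply absorbed into $R_{n,\epsilon}$ at order $\epsilon^\alpha$ together with all the leftover terms of order $\ge\epsilon^{n+1}$. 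One has to check the leftover-at-order-$\epsilon^\alpha$-for-$\alpha\le n$ diagonal pieces also get pushed into $R$ consistently; the standard fix, as in Proposition~\ref{prop:HInverse}(ii)'s use, is that one only needs $[B_{n,\epsilon},\Pi_{n,\epsilon}]$ to match, and since $[\,\cdot\,,P]$ kills the diagonal part, choosing $A_\alpha$ via $\caI$ of the \emph{full} operator $\iu\dot A_{\alpha-1}+F_\alpha$ makes $\iu[H,A_\alpha]+\iu\dot A_{\alpha-1}+F_\alpha$ purely diagonal, hence commuting with $P$ up to the $U_{n,\epsilon}$-conjugation — I'd need to be a little careful here that the cancellation is exact at the level of $[\cdot,\Pi_{n,\epsilon}]$ rather than $[\cdot,P]$, which is handled by conjugating everything by $U_{n,\epsilon}$ before matching powers, so that $P$ appears rather than $\Pi_{n,\epsilon}$. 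This is the one genuinely delicate point of organization.

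\textbf{Locality and regularity.} With the $A_\alpha$ defined, the remaining work is to verify the membership and smoothness claims, and this is where the machinery of Sections~\ref{sec: commutators} and~\ref{sec: the map i} does all the heavy lifting. By induction on $\alpha$: assuming $A_1,\dots,A_{\alpha-1}\in\caL_{\caS,\infty}$ with the stated number of $\caL_{\caS,\infty}$-derivatives, the polynomial $F_\alpha$ is built from finitely many commutators of these and of $\dot H$ (which is in $\caL_{\caE,\infty}\subset\caL_{\caS,\infty}$ by Assumption~\ref{A:Hamiltonians}), so by Lemma~\ref{lem:IntComm}(iii) and the commutator construction \eqref{Interaction Commutator}, $\iu\dot A_{\alpha-1}+F_\alpha\in\caL_{\caS,\infty}$; then Proposition~\ref{prop:HInverse}(iii) gives $A_\alpha=\caI(\cdot)\in\caL_{\caS,\infty}$. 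For the time-derivatives, Proposition~\ref{prop:HInverse}(iv) together with \eqref{diff of I} shows that each application of $\caI$ and each commutator costs at most the available smoothness of $H$; since $F_\alpha$ involves at most one derivative of each $A_\beta$ with $\beta<\alpha$, an easy induction gives $A_\alpha\in\caC^{\tilde m-\alpha}$ with $A_\alpha^{(j)}\in\caL_{\caS,\infty}$ for $j\le\tilde m-\alpha$, using that $H^{(j)}\in\caL_{\caE,\infty}$ for $j\le d+m=\tilde m$. Finally $R_{n,\epsilon}$: it is a finite sum $\sum_{\alpha}\epsilon^\alpha(\text{diagonal part})+\sum_{\alpha>n}\epsilon^\alpha(\dots)$ of nested commutators of the $A_\beta,\dot A_\beta$, hence in $\caL_{\caS,\infty}$; the factor $\epsilon^{n+1}$ comes from the fact that all orders $\epsilon^1,\dots,\epsilon^n$ have been arranged to cancel (the diagonal pieces at those orders cancel because, after the $U_{n,\epsilon}$-conjugation, they commute with $P$ and therefore do not contribute to $[\,\cdot\,,\Pi_{n,\epsilon}]$), and each norm bound $\Vert\Phi_{n,\epsilon}\Vert_{\caS,k}\le r_{n,k}(s)\epsilon^{n+1}$ follows by collecting the finitely many $\epsilon$-independent interaction norms of the constituent commutators. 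I expect the main obstacle to be precisely the order-matching bookkeeping in the middle paragraph: making rigorous that choosing $A_\alpha$ through $\caI$ of the \emph{full} (not just off-diagonal) operator produces a remainder that, after conjugation, genuinely commutes with $P$ at each order $\le n$, so that it drops out of the commutator $[H+R_{n,\epsilon},\Pi_{n,\epsilon}]$; everything else is a routine, if lengthy, application of the earlier lemmas.
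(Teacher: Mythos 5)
There is a genuine gap at the very first step of your construction. Your differentiation identity $\iu\epsilon\dot\Pi_{n,\epsilon}=[B_{n,\epsilon},\Pi_{n,\epsilon}]$ with $B_{n,\epsilon}=\iu\epsilon\,\dot U_{n,\epsilon}U_{n,\epsilon}\str$ is false, because $P=P_s$ is itself $s$-dependent: $\dot\Pi_{n,\epsilon}=\dot U_{n,\epsilon}P U_{n,\epsilon}\str+U_{n,\epsilon}\dot P U_{n,\epsilon}\str+U_{n,\epsilon}P\dot U_{n,\epsilon}\str$, and the middle term is not captured by your $B_{n,\epsilon}$. That term is the structural heart of the proof: by Corollary~\ref{cor:quasi-ad}, $\dot P=\iu[K,P]$ with $K=\caI(\dot H)$ a local Hamiltonian, so $\iu\epsilon U_{n,\epsilon}\dot P U_{n,\epsilon}\str=-\epsilon[U_{n,\epsilon}K U_{n,\epsilon}\str,\Pi_{n,\epsilon}]$, and the quantity that must be made $O(\epsilon^{n+1})$ after conjugation is $[\iu\epsilon U_{n,\epsilon}\str\dot U_{n,\epsilon}-\epsilon K+H-U_{n,\epsilon}\str H U_{n,\epsilon},\,P]$, not the one you expand. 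Dropping the $-\epsilon K$ contribution corrupts the source terms of your recursion at every order. Concretely, at order $\epsilon$ your equation ``$\iu[H,A_1]+\dot H=0$ (off-diagonal part)'' gives $A_1=\caI(\dot H)=K$, whereas the correct condition is $[K-\iu[A_1,H],P]=0$, solved via Proposition~\ref{prop:HInverse}(ii) by $A_1=\caI(K)=\caI(\caI(\dot H))$ --- one application of $\caI$ more. Indeed $\dot H$ cannot appear at order $\epsilon$ in a correct expansion at all: $\iu\epsilon U_{n,\epsilon}\str\dot U_{n,\epsilon}$ is $O(\epsilon^{2})$ because $S_{n,\epsilon}=O(\epsilon)$, and conjugating $H$ produces only $-\iu\epsilon[A_1,H]$ at that order; $\dot H$ enters solely through $K=\caI(\dot H)$ coming from $\dot P$. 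With your choice of the $A_\alpha$ the uncancelled term $\iu\epsilon U_{n,\epsilon}\dot P U_{n,\epsilon}\str$ survives and is of size $\epsilon$, so the resulting $R_{n,\epsilon}$ is only $O(\epsilon)$ rather than $O(\epsilon^{n+1})$, and the key estimate of the lemma fails.

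Two further remarks. Your first suggestion, to absorb the order-$\epsilon^{\alpha}$ diagonal blocks (for $\alpha\le n$) into $R_{n,\epsilon}$, would by itself already destroy the bound $\Vert\Phi_{n,\epsilon}\Vert_{\caS,k}\le r_{n,k}(s)\epsilon^{n+1}$; your subsequent fix is the correct one and coincides with the paper's mechanism: one only needs the order-$\epsilon^\alpha$ coefficients to \emph{commute with} $P$, i.e.\ $[Q_{\alpha-1}-H_\alpha,P]=0$, which Proposition~\ref{prop:HInverse}(ii) guarantees for $A_\alpha=\caI(L_\alpha-Q_{\alpha-1})$ applied to the full (not merely off-diagonal) operator, so the diagonal pieces never enter $R_{n,\epsilon}$; only the genuine $O(\epsilon^{n+1})$ Taylor remainders of the expansions of $U_{n,\epsilon}\str H U_{n,\epsilon}$ and $\iu U_{n,\epsilon}\str\dot U_{n,\epsilon}$ do, conjugated back by $U_{n,\epsilon}$. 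The locality and regularity part of your argument (Lemma~\ref{lem:IntComm}, Proposition~\ref{prop:HInverse}(iii,iv), induction in $\alpha$, and collecting finitely many $\epsilon$-independent interaction norms for the remainder, with the explicit simplex-form Taylor remainder controlled as in Lemma~\ref{lem:IofPhi}) is in line with the paper and would go through once the recursion is set up with the spectral-flow term included.
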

\begin{proof}
For notational clarity, we drop all indices $\epsilon$ in the proof. The Ansatz $\Pi_n :=  U_n P U_n\str$ and~(\ref{eq: qa flow}) yield
\begin{align}\label{Pi equation}
\iu\epsilon \dot\Pi_n &= \iu\epsilon \dot U_n P U_n\str + \iu\epsilon  U_n P \dot U_n\str - \epsilon  U_n [K, P] U_n\str \\
&= [H,\Pi_n] + \left[\iu\epsilon \dot U_n U_n\str - \epsilon U_n K U_n\str + (U_n H U_n\str - H),\Pi_n\right]\nonumber
\end{align}
where we have used that $U_n\dot U_n\str = -\dot U_n U\str$ by unitarity and $[U_n H U_n\str, \Pi_n] = U_n [H,P] U_n\str = 0$. We write the second commutator as
\begin{equation*}
\left[\iu\epsilon \dot U_n U_n\str - \epsilon U_n K U_n\str + (U_n H U_n\str - H),\Pi_n\right] 
= U_n \left[\iu\epsilon  U_n\str \dot U_n - \epsilon K + H - U_n\str H U_n ,P \right] U_n\str
\end{equation*}
and aim for an expansion to finite order in $\epsilon$. For this, let $S_n = \sum_{\alpha=1}^n\epsilon^\alpha A_\alpha$ and
\begin{equation}\label{HExpansion}
U_n\str H U_n = \ep{-\iu \ad{S_n}}(H) = \sum_{k=0}^n \frac{(-\mathrm{i})^k}{k!} \Big(\sum_{\alpha=1}^n \epsilon^\alpha \ad{A_\alpha}\Big)^k (H) + \caO(\epsilon^{n+1})
\end{equation}
and define $U_n\str H U_n = : \sum_{\alpha=0}^n \epsilon^\alpha H_\alpha + \epsilon^{n+1}h_{n}(\epsilon)$, namely
\begin{equation}\label{Ha}
H_\alpha = \sum_{\mathbf j: s(\mathbf j) = \alpha}\frac{(-\mathrm{i})^k}{k!}\ad{A_{j_k}}\cdots\ad{A_{j_1}}(H),
\end{equation}
where the sum is over finite vectors $\mathbf{j}= (j_1,\ldots,j_k)$, and $s( \mathbf{j})= j_1+\ldots+j_k$. For the first few orders, this reads
\begin{equation*}
H_0 = H,\qquad H_1 = -\iu [A_1,H],\qquad H_2 = -\iu [A_2,H] - \frac{1}{2}[A_1,[A_1,H]].
\end{equation*}
Similarly, Duhamel's formula
\begin{equation*}
\mathrm{i} U_n^*\dot U_n = -\int_0^1 \ep{-\iu\lambda S_n} \dot S_n \ep{\iu\lambda S_n} \dd \lambda
\end{equation*}
can be expanded as
\begin{equation*}
\mathrm{i} U_n^*\dot U_n = \sum_{j=1}^n \epsilon^{j} \sum_{k=0}^{n-1} \mathrm{i}^k\frac{(-1)^{k+1}}{(k+1)!}\Big(\sum_{\alpha=1}^n\epsilon^\alpha \ad{A_\alpha} \Big)^k\left(\dot A_j\right) + \caO(\epsilon^{n+1}) = : \sum_{\alpha=1}^{n-1} \epsilon^\alpha Q_\alpha + \epsilon^{n}q_{n-1}(\epsilon).
\end{equation*}
Here,
\begin{equation}\label{Qa}
Q_\alpha = -\mathrm{i} \sum_{\mathbf j: s(\mathbf j) = \alpha}\frac{(-\mathrm{i})^{k}}{k!}\ad{A_{j_k}}\cdots\ad{A_{j_2}}(\dot A_{j_1}),
\end{equation}
namely
\begin{equation*}
Q_1 = -\dot A_1,\qquad Q_2 = -\dot A_2 + \frac{\iu}{2}[A_1,\dot A_1].
\end{equation*}
To proceed it is natural to define also $Q_0 = -K$. Altogether,
\begin{equation}\label{RExpansion}
\iu\epsilon  U_n\str \dot U_n - \epsilon K + H - U_n\str H U_n = \sum_{\alpha=1}^n \epsilon^\alpha (Q_{\alpha-1} - H_\alpha) + \epsilon^{n+1}(q_{n-1}(\epsilon) - h_{n}(\epsilon)).
\end{equation}
For $1\leq \alpha\leq n$, we observe immediately that $H_\alpha$ depends only on $\{A_j: 1\leq j\leq \alpha\}$, while $Q_\alpha$ depends on $\{A_j: 1\leq j\leq \alpha-1\}$ and $\{\dot A_j: 1\leq j\leq \alpha \}$.

We now inductively construct smooth local Hamiltonians $\{A_1,\ldots,A_n\}$, such that $A_\alpha\in\caL_{\caS,\infty}$ and $A^{(\tilde m-\alpha)}_\alpha\in\caL_{\caS,\infty}$, such that
\begin{equation}\label{orders}
[Q_{\alpha-1} - H_\alpha, P] = 0
\end{equation}
for $1\leq \alpha\leq n$. Defining then
\begin{equation*}
R_n := \epsilon^{n+1}U_n (q_{n-1}(\epsilon) - h_{n}(\epsilon))U_n\str,
\end{equation*}
equation~(\ref{Pi equation}) reduces to
\begin{equation*}
\iu\epsilon \dot\Pi_n = [H + R_n,\Pi_n]
\end{equation*}
as claimed. 

The case $\alpha = 1$. Here, (\ref{orders}) reads
\begin{equation}\label{eq:A1}
[K - \iu [A_1,H] , P] = 0
\end{equation}
which has solution $A_1\in\caL_{\caS,\infty}$ given by
\begin{equation*}
A_1 = \caI(K)
\end{equation*}
by Proposition~\ref{prop:HInverse}(ii). Moreover, by Assumption~\ref{A:Hamiltonians} and Proposition~\ref{prop:HInverse}(iii,iv), we have first that $K = \caI(\dot H) \in\caC^{(\tilde m-1)}$ implies that $A_1\in\caC^{(\tilde m-1)}$, and second that all derivatives of $K$ define local Hamiltonians, so that all derivatives of $A_1$ do so as well.

Induction. Let $\alpha>1$ and assume that the claim holds for all $1\leq \beta<\alpha$. Isolating the dependence on $A_\alpha$ in~(\ref{orders}) by writing
\begin{equation*}
H_\alpha= -\iu [A_\alpha,H] +  L_\alpha,
\end{equation*}
the equation becomes
\begin{equation}\label{AalphaEq}
[ Q_{\alpha-1} - L_\alpha, P] + \iu [ [A_\alpha,H], P] = 0.
\end{equation}
Note that $L_\alpha$ is a linear combination of multicommutators involving only $H$ and $\{A_\beta:1\leq \beta\leq \alpha-1\}$ by~(\ref{Ha}). Hence, by the induction hypothesis, $L_\alpha \in \caL_{\caS,\infty}$ by Lemma~\ref{lem:IntComm} and $L_\alpha\in\caC^{(\tilde m-\alpha+1)}$ with all derivatives defining local Hamiltonians. We now choose
\begin{equation}\label{Choice of Phi}
A_\alpha= \caI(L_\alpha - Q_{\alpha-1}),
\end{equation}
which is indeed a solution of~(\ref{AalphaEq}) by Proposition~\ref{prop:HInverse}(ii). By the remarks above and Proposition~\ref{prop:HInverse}(iii), $A_\alpha\in\caL_{\caS,\infty}$. Finally, $Q_\alpha\in\caC^{(\tilde m-\alpha-1)}$ by~(\ref{Qa}) and $L_\alpha\in\caC^{(\tilde m-\alpha+1)}$ imply that $A_\alpha\in\caC^{(\tilde m-\alpha)}$, see Proposition~\ref{prop:HInverse}(iv).

It remains to prove the local estimates on $R_n$. First of all,
\begin{equation*}
\epsilon^{n+1}h_n = \sum_{\substack{{\bf j}: s({\bf j})>n \\ j_i\leq n,k\leq n}}\frac{(-\iu)^k}{k!}\epsilon^{s({\bf j})}\ad{A{j_k}}\cdots \ad{A_{j_1}}(H) 
+ (-\iu)^{n+1}\int_{\Sigma_{n+1}}\ep{\iu u_{n+1}S_n}\ad{S_{n}}^{n+1}(H)\ep{-\iu u_{n+1}S_n} d{\bf u},
\end{equation*}
where $\Sigma_{n+1} =\{{\bf u}\in\bbR^{n+1}:0\leq u_1\leq \ldots \leq u_{n+1}\leq 1\}$. Note that the sum has a finite number of terms, while both terms are of order $\epsilon^{n+1}$. The multicommutators can be estimated by Lemma~\ref{lem:IntComm}(ii), while the action of $\exp(-\iu \ad{S_n})$ is controlled by Lemma~\ref{lem:IofPhi}(i) -- in fact a slightly simpler version thereof where $W_\gamma$ is replaced by the indicator function of $[0,1]$. Hence,
\begin{multline}\label{hn}
\epsilon^{n+1} \Vert h_n \Vert_{\caS,l} \leq \sum_{\substack{{\bf j}: s({\bf j})>n \\ j_i\leq n,k\leq n}}\frac{\epsilon^{s({\bf j})}}{k!} C^k 2^{k(l+k)}\Vert \Phi_{A_{j_k}} \Vert_{\caS,l+k} \cdots \Vert \Phi_{A_{j_1}} \Vert_{\caS,l+k}\Vert \Phi_{H} \Vert_{\caS,l+k} \\
+\frac{1}{(n+1)!} C^{n+1} 2^{(n+1)(l+n+2)} \Vert \Phi_{S_n}\Vert_{\caS,l+n+2}^{n+1}\Vert \Phi_{H} \Vert_{\caS,l+n+2},
\end{multline}
for any $l\in\bbN\cup\{0\}$. This yields an upper bound on $\Vert h_n \Vert_{\caS,l}$ that is uniform in $\epsilon$ for $0<\epsilon\leq 1$, since $s({\bf j})\geq n+1$ and  $\Vert \Phi_{S_n}\Vert_{\caS,l+n+2}$ is of order $\epsilon$. A similar bound holds for $q_{n-1}$ with $H$ being replaced by $\dot A_\alpha$, concluding the proof.
\end{proof}

As it is to be expected in adiabatic theory, the $A_\alpha$ depend locally in time on the Hamiltonian and its derivatives.
\begin{lemma}\label{lma:local}
With the assumptions of Lemma~\ref{ConstructPhi}, if there is $s_0\in[0,1]$ such that $H_{s_0}^{(\alpha)} = 0$ for all $1\leq \alpha \leq k \leq d+m$, then $A_\alpha(s_0) = 0$ for all $1\leq \alpha\leq k$. 
\end{lemma}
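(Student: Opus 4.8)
The plan is to prove the statement by induction on $\alpha$, tracking how the explicit formulas~(\ref{Choice of Phi}) for the $A_\alpha$ propagate the vanishing hypothesis. The crucial observation is that the construction in Lemma~\ref{ConstructPhi} is, at every step, \emph{local in time}: the operator $A_\alpha(s)$ is built out of $\caI_s$ (which depends on $H_s$ at the frozen time $s$ only), applied to a polynomial expression in $H_s$, the lower-order $A_\beta(s)$ with $\beta<\alpha$, and their $s$-derivatives $\dot A_\beta(s)$, through~(\ref{Ha}),~(\ref{Qa}),~(\ref{Choice of Phi}). The hypothesis $H_{s_0}^{(j)} = 0$ for $1\leq j\leq k$ says precisely that at $s=s_0$ the driving and its first $k$ derivatives vanish; I want to show that this forces $A_\alpha(s_0)=0$ for $1\leq \alpha\leq k$.

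First I would treat the base case $\alpha=1$. Here $A_1 = \caI(K)$ with $K = \caI(\dot H)$. Since $\dot H_{s_0} = 0$ by hypothesis (as $k\geq 1$), we get $K_{s_0} = \caI_{s_0}(\dot H_{s_0}) = \caI_{s_0}(0) = 0$ because $\caI_s$ is linear, and hence $A_1(s_0) = \caI_{s_0}(0) = 0$. For the inductive step, suppose $1<\alpha\leq k$ and that $A_\beta(s_0)=0$ for all $1\leq\beta<\alpha$. I also need the stronger statement that $\dot A_\beta(s_0) = 0$ for $1\leq\beta\leq\alpha-1$; but this is exactly what the induction delivers if I phrase the inductive hypothesis correctly: since $A_\beta\in\caC^{(\tilde m-\beta)}$ and $\beta\leq k-1$, applying the already-proven statement of the lemma at the \emph{same} $s_0$ but with $k$ replaced by a smaller value — or more cleanly, observing that $\dot A_\beta(s_0)$ is itself a polynomial expression (obtained by differentiating~(\ref{Choice of Phi})) in $H_{s_0}^{(j)}$, $A_\gamma(s_0)$, $\dot A_\gamma(s_0)$ for $\gamma<\beta$ and $j\leq \beta+1\leq k$, all of which vanish by the nested induction — gives $\dot A_\beta(s_0)=0$. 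Then in~(\ref{Ha}), $L_\alpha$ is a linear combination of iterated commutators each containing at least one factor among $A_1,\dots,A_{\alpha-1}$ (the term with no $A$'s is just $H_0=H$, which is not part of $L_\alpha$: recall $H_\alpha = -\iu[A_\alpha,H]+L_\alpha$ isolates the $A_\alpha$ piece, and every remaining multicommutator in~(\ref{Ha}) with $s(\mathbf j)=\alpha\geq 1$ involves at least one $\ad{A_{j_i}}$ with $j_i\leq\alpha-1$), hence $L_\alpha(s_0)=0$. Similarly $Q_{\alpha-1}(s_0)=0$ from~(\ref{Qa}), since every term there involves a factor $\dot A_{j_1}$ or an $\ad{A_{j_i}}$ with indices $\leq\alpha-1$, all vanishing at $s_0$. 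Therefore $A_\alpha(s_0) = \caI_{s_0}(L_\alpha(s_0) - Q_{\alpha-1}(s_0)) = \caI_{s_0}(0) = 0$.

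To make the argument clean, I would actually run a single induction on $\alpha$ proving the combined statement: for every $\alpha$ with $1\leq\alpha\leq k$ and every $j$ with $0\leq j\leq k-\alpha$, one has $A_\alpha^{(j)}(s_0)=0$. (Note $A_\alpha\in\caC^{(\tilde m -\alpha)}$ and $k-\alpha\leq \tilde m-\alpha$, so these derivatives exist.) The base case $\alpha=1$ requires $A_1^{(j)}(s_0)=0$ for $0\leq j\leq k-1$: differentiate $A_1=\caI(\caI(\dot H))$ using~(\ref{diff of I}); each derivative introduces only the frozen dynamics $\tau^{s_0}_t$, integrals of $\tau^{s_0}_{-u}(\dot H_{s_0})$, and $H_{s_0}^{(i)}$ with $i\leq j+1\leq k$, all of which vanish (the relevant point being that $\caI_s(0)=0$ and that every term ends up proportional to some $H_{s_0}^{(i)}$ with $1\leq i\leq k$). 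The inductive step then repeats the commutator-bookkeeping above, now for derivatives, using Leibniz on~(\ref{Ha}),~(\ref{Qa}),~(\ref{Choice of Phi}) and~(\ref{diff of I}). The main obstacle — though it is really a matter of careful bookkeeping rather than a deep difficulty — is verifying that after differentiating $j$ times the formula for $A_\alpha$, \emph{every} resulting term still contains at least one factor of the form $A_\beta^{(i)}(s_0)$ with $\beta<\alpha$, $i\leq k-\beta$, or $H_{s_0}^{(i)}$ with $1\leq i\leq k$, so that the induction hypothesis applies; one has to check the bound $i\leq k-\beta$ holds, which follows from the order-counting $j_1+\cdots+j_\ell = \alpha$ built into~(\ref{Ha}) and~(\ref{Qa}) together with $j\leq k-\alpha$. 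Once this combinatorial claim is in place, the vanishing is immediate from linearity of $\caI_s$ and multilinearity of iterated commutators.
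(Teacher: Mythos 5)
Your proposal is correct and follows essentially the same route as the paper: a recursion in $\alpha$ with the strengthened hypothesis that $A_\beta^{(j)}(s_0)=0$ for $j\leq k-\beta$, using linearity of $\caI_{s_0}$, the local-in-time formula~(\ref{Choice of Phi}), and the fact (via~(\ref{diff of I}) and $H^{(i)}_{s_0}=0$) that differentiating $\caI_s(G_s)$ at $s_0$ reduces to $\caI_{s_0}$ applied to derivatives of $G$. Your version merely spells out the derivative-counting in~(\ref{Ha}) and~(\ref{Qa}) a bit more explicitly than the paper does, which is fine.
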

\begin{proof}
Since $\dot{H}_{s_0} = 0$, it follows from~(\ref{diff of I}) that $\frac{d}{ds}  \caI_s(G_s)\vert_{s=s_0} = \caI_{s_0}(\dot{G}_{s_0})$ for all differentiable $G_s$.  Using this $j-1$ times to differentiate Eq.~(\ref{Choice of Phi}), we see that $A_{\alpha}(s_0) = \dots = A_{\alpha}^{(j-1)}(s_0) = 0$ provided $A_1^{(j')}(s_0) = \dots = A_{\alpha-1}^{(j')}(s_0) = 0$ for all $0 \leq j' \leq j$. The result is then proved by recursion by realizing that the first $k-1$-derivatives of  $A_1(s) = \caI_{s}(K_{s})$ vanish at $s_0$. This holds  because 
$K_{s_0}^{(\alpha-1)} = \caI_{s_0}(H_{s_0}^{(\alpha)})$, and the corresponding derivatives of the Hamiltonian vanish by assumptions of the lemma.
\end{proof}

\begin{proof}[Proof of Theorem~\ref{thm:mainVector}] In this proof, we reinstate temporarily the $s$-dependence everywhere, with the conventions used in Section~\ref{sub:Setup}. We first note that by the assumptions and Lemma~\ref{lma:local}, $U_{n,\epsilon}(0) = \id$. Let $V_{n,\epsilon}(s,s')$ be the solution of
\begin{equation*}
\mathrm{i}\epsilon\frac{d}{ds} V_{n,\epsilon}(s,s') = (H_s + R_{n,\epsilon}(s) ) V_{n,\epsilon}(s,s'),\qquad V_{n,\epsilon}(s,s') = \id.
\end{equation*}
Since $V_{n,\epsilon}(s,0)P_0V_{n,\epsilon}(s,0)\str$ solves~(\ref{PiDynamics}) for the same initial condition, we have that 
\begin{equation}\label{UnVn}
V_{n,\epsilon}(s,0)P_0V_{n,\epsilon}(s,0)\str  =  \Pi_{n,\epsilon}(s) = U_{n,\epsilon}(s)P_s U_{n,\epsilon}(s)\str,
\end{equation}
for all $s\in[0,1]$. In order to compare it to the Schr\"odinger evolution
\begin{equation*}
\iu \epsilon \frac{d}{ds} U_\epsilon(s,s') = H_s U_\epsilon(s,s'),
\end{equation*}
 we note that for any operator $O$,
\begin{multline}\label{UV}
V_{n,\epsilon}(s,s')^* O V_{n,\epsilon}(s,s') - U_\epsilon(s,s')^* OU_\epsilon(s,s') = V_{n,\epsilon}(r,s')^* U_\epsilon(s,r)^* O U_\epsilon(s,r)V_{n,\epsilon}(r,s')\vert_{r=s'}^{r=s} \\
= \frac{-\mathrm{i}}{\epsilon}\int_{s'}^s V_{n,\epsilon}(r,s')^* \left[R_{n,\epsilon}(r), U_\epsilon(s,r)^* O U_\epsilon(s,r) \right] V_{n,\epsilon}(r,s') dr.
\end{multline}

Let now $\psi_\epsilon(s)$ be the solution of Schr\"odinger's equation, namely
\begin{equation*}
\psi_\epsilon(s) = U_\epsilon(s,0)\psi_0,
\end{equation*}
with the initial condition $\psi_0$ in the range of $P_0$, see~(\ref{Initial condition}). Let also $\phi_{n,\epsilon}(s)$ be the solution of Schr\"odinger's equation with counterdiabatic driving for the same initial condition,
\begin{equation*}
\phi_{n,\epsilon}(s) = V_{n,\epsilon}(s,0)\psi_0.
\end{equation*}
Then by (\ref{UV}),
\begin{equation}\label{Duhamel 1}
\langle \psi_\epsilon(s), O\psi_\epsilon(s)\rangle - \langle \phi_{n,\epsilon}(s), O \phi_{n,\epsilon}(s)\rangle = \frac{\iu}{\epsilon}\int_0^s \langle \phi_{n,\epsilon}(r), \left[R_{n,\epsilon}(r), U_\epsilon(s,r)^* O U_\epsilon(s,r) \right]\phi_{n,\epsilon}(r)\rangle dr.
\end{equation}
Now, by Lemma~\ref{ConstructPhi}, we have that $R_{n,\epsilon}(r)$ is a local Hamiltonian of order $\epsilon^{n+1}$. Furthermore, its commutator with a local observable $O$ evolved for a time $\vert \epsilon^{-1} s-\epsilon^{-1} r \vert$ can be controlled using Lemma~\ref{lma:rest}, yielding
\begin{equation*}
\left\vert \langle \phi_{n,\epsilon}(r), \left[R_{n,\epsilon}(r), U_\epsilon(s,r)^* O U_\epsilon(s,r) \right]\phi_{n,\epsilon}(r)\rangle \right\vert
\leq C r_{n,0}(r) \vert \supp(O)\vert^2 \Vert  O \Vert \epsilon^{n +1 - d}.
\end{equation*}
Finally, we note that by~(\ref{UnVn}), 
\begin{equation*}
\phi_{n,\epsilon}(s) = V_{n,\epsilon}(s,0)\psi_0 \in \mathrm{Ran} (\Pi_{n,\epsilon}(s)) = \mathrm{Ran} \left(U_{n,\epsilon}(s)P_sU_{n,\epsilon}(s)\str\right),
\end{equation*}
namely, there is a vector $\widetilde\psi_{n,\epsilon}(s) \in \mathrm{Ran}P_s$ such that $\phi_{n,\epsilon}(s) = U_{n,\epsilon}(s)\widetilde\psi_{n,\epsilon}(s)$, whence
\begin{equation*}
\langle \phi_{n,\epsilon}(s), O \phi_{n,\epsilon}(s)\rangle - \langle \widetilde\psi_{n,\epsilon}(s), O  \widetilde\psi_{n,\epsilon}(s)\rangle = \langle \widetilde\psi_{n,\epsilon}(s),  (U_{n,\epsilon}(s)\str O U_{n,\epsilon}(s) -O) \widetilde\psi_{n,\epsilon}(s) \rangle.
\end{equation*}
By Duhamel's formula
\begin{equation}\label{Duhamel 2}
 U_{n,\epsilon}(s)\str O U_{n,\epsilon}(s) -O = \iu\int_0^1\int_0^1\left[\ep{-\iu S_n(s')}O\ep{\iu S_n(s')}, \ep{-\iu uS_n(s')}\dot S_n(s') \ep{\iu uS_n(s')}\right]d u ds',
\end{equation}
which is bounded again by Lemma~\ref{lma:rest} and the fact that $\dot S_n$ is of order $\epsilon$. Altogether, this yields the bound
\begin{equation*}
\left\vert \langle \psi_\epsilon(s), O\psi_\epsilon(s)\rangle - \langle  \widetilde\psi_{n,\epsilon}(s), O  \widetilde\psi_{n,\epsilon}(s)\rangle \right\vert \leq \epsilon \Vert  O \Vert \vert \supp(O)\vert^2 \left(C_1\epsilon^{n-(d+1)} + C_2\right) 
\end{equation*}
which is the claim (i) of the theorem we had set to prove with the choice $n \geq d+1$.

Under the stronger smoothness assumptions, we can choose $n = d+m$, yielding an estimate of order $\epsilon^{m}$ for~(\ref{Duhamel 1}). Since the driving has stopped as $s=1$, Lemma~\ref{lma:local} implies that $A_\alpha(1) =0$ for $1\leq \alpha\leq m+d$. Hence $S_{m+d,{\epsilon}}(1) = 0$ and the claim (ii) of the theorem holds with $\widetilde\psi_{n,\epsilon}(s)= \phi_{n,\varepsilon}$.
\end{proof}
We are now in a position to discuss more explicitly the possible choices mentioned on page~\pageref{eq: parallel transport} for the vector $\widetilde\psi_{\epsilon}(s)$ of the theorem. The proof above yields a concrete expression for it, namely
\begin{equation*}
\widetilde\psi_{n,\epsilon}(s) = U_{n,\epsilon}(s)\str\phi_{n,\epsilon}(s)  = U_{n,\epsilon}(s)\str V_{n,\epsilon}(s,0)\psi_0.
\end{equation*}
Hence, $\widetilde\psi_{n,\epsilon}(s)$ is a solution of the differential equation
\begin{equation} \label{eq: equation tilde psi}
\iu \frac{d}{ds}\widetilde{\psi}_{n,\epsilon}(s) = \widetilde K_{n,\epsilon}(s) \widetilde{\psi}_{n,\epsilon}(s),\qquad \widetilde{\psi}_{n,\epsilon}(0) = \psi_0,
\end{equation}
where
\begin{equation}\label{eq:Ktilde}
\widetilde K_{n,\epsilon}(s) = \iu \dot U_{n,\epsilon}(s)\str U_{n,\epsilon}(s) + U_{n,\epsilon}(s)\str \epsilon^{-1}(H_s + R_{n,\epsilon}(s) U_{n,\epsilon}(s).
\end{equation}
Expanding in powers of $\epsilon$, we get
\begin{equation*}
\widetilde K_{n,\epsilon}(s) = \epsilon^{-1}H_s - \iu [A_1(s),H_s]  + B(s),
\end{equation*}
where again $B\in\caL_{\caS,\infty}$ is of order $\epsilon$. This expansion may seem useless since the leading term $ \epsilon^{-1}H_s$ is simply the generator of the original Schr\"odinger equation. However, we can capitalize on the knowledge that $\widetilde{\psi}_{n,\epsilon}(s)=P_s\widetilde{\psi}_{n,\epsilon}(s)$ to multiply any term in $\widetilde K_{n,\epsilon}(s)$ from the right with $P_s$. Let us for simplicity assume first exact degeneracy in the patch $P_s$, i.e.\ $\delta=0$  and set, without loss of generality, $H_sP_s=0$. Then we conclude that the highest order term vanishes, while the $\epsilon$-independent term reduces to
\begin{equation*}
\iu [A_1(s),H_s]P_s=  [\iu[A_1(s),H_s],P_s]P_s,
\end{equation*}
which is also equal to $[K_s,P_s]P_s$ by~(\ref{eq:A1}), and further to $-\iu \dot P_sP_s = \iu[P_s,\dot P_s]P_s$. 

Since $\widetilde K_{n,\epsilon}(s)$ and $\iu[P_s,\dot P_s]$ differ by a local Hamiltonian, $B(s)$, of order $\epsilon$ when acting on the range of $P_s$, and both generate a dynamics within $\mathrm{Ran} P_s$, the Duhamel formula yields
\begin{equation*}
\sup_{s\in[0,1]}\left\vert
\langle \widetilde \psi_{n,\epsilon}(s), O \widetilde \psi_{n,\epsilon}(s) \rangle - \langle \Omega(s), O \Omega(s)\rangle
\right\vert \leq C(O) \epsilon
\end{equation*}
where $\Omega(s)$ is the solution of the parallel transport equation \eqref{eq: parallel transport 2}.

It remains to explain that this conclusion remains valid when  the strict degeneracy $\delta=0$ is relaxed to the condition \eqref{eq: condition delta}. In that case, the expansion for  $\widetilde K_{n,\epsilon}(s)P_s$ has two additional terms, namely
$$
\epsilon^{-1} P_sH_sP_s,\qquad \text{and} \qquad \iu P_s[A_1(s),H_s] P_s.
$$
The first term has a norm bound $\epsilon^{-1} \delta$, and the second term $C\delta |\Lambda| $.  Therefore, if \eqref{eq: condition delta} holds, both terms have norm bounded by $C\epsilon$ and the argument goes through.

We now turn to linear response theory.
\begin{proof}[Proof of Theorem~\ref{thm:LR}]
Let $H_{s,\alpha} = H_{\mathrm{initial}} + \ep{s} \alpha V$, and $P_{\epsilon,\alpha}(s)$ be the solution of~(\ref{SchDensity}) with initial condition $\lim_{s\to-\infty}P_{\epsilon,\alpha}(s) = P_\alpha$. The proof of Lemma~\ref{ConstructPhi} can be reproduced with the estimates $r_{n,k}(s)$ on the counterdiabatic driving satisfying $r_{n,k}(s) = C_{n,k} \alpha \ep s$, since $ H_{s,\alpha}^{(k)} = \ep{s} \alpha V$ for all $k>1$. Carrying on with the proof of the theorem, we obtain instead of~(\ref{Duhamel 1})
\begin{equation*}
\left\vert \frac{\Tr( P_{\epsilon,\alpha}(s) O)}{\Tr( P_{\epsilon,\alpha}(s))} - \frac{\Tr( \Pi_{n,\epsilon}(s) O)}{\Tr( \Pi_{n,\epsilon}(s))}\right\vert
\leq  C \alpha \vert X\vert^2 \Vert  O \Vert \epsilon^{n-d} \lim_{s_0\to-\infty}\int_{s_0}^s \ep{s'} \vert s'-s_0\vert^d ds',
\end{equation*}
which is finite thanks to the exponential factor. A similar argument holds for~(\ref{Duhamel 2}) with a similar conclusion, where the norm of $ S_n$ provides the factor $\alpha$. Altogether,
\begin{equation*}
\left\vert  \omega_{\epsilon,\alpha;\sigma}(O) -  \omega_\alpha(O)\right\vert\leq  C \alpha \vert X\vert^2 \Vert  O \Vert \epsilon
\end{equation*}
uniformly in $\Lambda,\sigma$. Hence,
\begin{equation*}
\alpha^{-1}\vert \omega_{\epsilon,\alpha;0}(J) - \omega_0(J) + \iu\alpha\omega_0([K_0,J])\vert 
\leq  C \vert \mathrm{supp}(J)\vert^2 \Vert  J \Vert \epsilon
+ \vert \alpha^{-1}( \omega_\alpha(J) -\omega_0(J)) + \iu\omega_0([K_0,J]) \vert .
\end{equation*}
Now, by Corollary~\ref{cor:quasi-ad}, $\frac{d}{d\alpha}\omega_\alpha(O)= -\iu\omega_\alpha([K_\alpha,O])$ and by Proposition~\ref{prop:HInverse}(iv), this derivative is continuous in $\alpha$, uniformly in the volume.  It follows that 
\begin{equation*}
 \alpha^{-1}( \omega_\alpha(J) -\omega_0(J)) + \iu\omega_0([K_0,J]) \longrightarrow 0,
\end{equation*}
as $\alpha\to 0$, uniformly in the volume.
\end{proof}

%%%%%%%%%%%%%%%%%%%%

\begin{proof}[Proof of the equality \eqref{eq: standard expression kubo}]
We abbreviate $H_{\mathrm{initial}}$ by $H$ and the corresponding spectral projection $P_0$ by $P$. We manipulate the right hand side of \eqref{eq: standard expression kubo}. First, $V$ can be replaced by $\widetilde V =PV(1-P)+(1-P)VP$ without changing the value of $f_{J,V}$. Then, since $\widetilde V$ is off-diagonal in the sense of Proposition \ref{prop:HInverse}, we have 
\begin{equation} \label{eq: repeat i}
\widetilde V=-\iu [H,\mathcal{I}(\widetilde V)].
\end{equation} Furthermore, for any off-diagonal $A$, using the spectral decomposition of $H$,
\begin{equation} \label{eq: convergence to one}
\iu\int_{0}^{\infty} dt e^{-\delta t}  \tau_{-t}([H,A])=  \int \frac{\iu(\mu-\lambda)}{\iu (\mu-\lambda) +\delta}  \dd E(\lambda) A  \dd E(\mu)  \longrightarrow  A
\end{equation}
as $\delta\downarrow 0$. Indeed, the spectral gap assumption combined with the fact that $A$ is off-diagonal ensures that $|\mu-\lambda| \geq \gamma >0$. 
Choosing $A=\caI(\widetilde V)$ and using the  equalities (\ref{eq: repeat i},\ref{eq: convergence to one}),  we find that the right hand side of \eqref{eq: standard expression kubo} equals $
-\iu \omega_{0}\left( [\caI(\widetilde V),J]\right)$. Since we can here again change $\widetilde V$ to $V$, we recover indeed the form for $f_{J,V}$ given in Theorem \ref{thm:LR}.
\end{proof}
%

%%%%%%%%%%%%%%%%%%%%
\subsection{Technical lemmas}

It remains to prove the few technical results used above.

\begin{lemma}\label{lem:IntComm}
With the definition~(\ref{Interaction Commutator}) of the interaction associated with a commutator of local Hamiltonians, the following holds:
\begin{enumerate}
\item If $H\in\caL_{\zeta,0}$, then for {any $O$ with $\mathrm{supp} (O) \subset \Lambda$},
\begin{equation*}
\Vert[H,O] \Vert \leq  2\Vert O\Vert \vert \mathrm{supp}(O) \vert \Vert F_\zeta\Vert_1 \Vert \Phi_H\Vert_{\zeta,0}.
\end{equation*}
\item Let $n,k\in\bbN$ and $A_0,\ldots, A_k\in \caL_{\zeta,n+k}$. Then $\ad{A_k}\cdots\ad{A_1}(A_0)\in\caL_{\zeta,n}$ and there is a $C>0$ depending on $\zeta$ but not on $n,k$, such that 
\begin{equation*}
\Vert\Phi_{\ad{A_{j_k}}\cdots \ad{A_{j_1}}(A_0) }\Vert_{\zeta,n}\leq  C^k 2^{k(n+k)}\Vert \Phi_{A_{j_k}} \Vert_{\zeta,n+k} \cdots \Vert \Phi_{A_{j_1}} \Vert_{\zeta,n+k}\Vert \Phi_{A_0} \Vert_{\zeta,n+k}.
\end{equation*}
\item If $A_0,\ldots, A_k\in \caL_{\zeta,\infty}$, then $\ad{A_k}\cdots\ad{A_1}(A_0)\in\caL_{\zeta,\infty}$
\end{enumerate}
\end{lemma}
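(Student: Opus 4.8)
The proof reduces to two ingredients: the elementary bound (i), and a single-commutator estimate that is iterated to give (ii), with (iii) an immediate consequence.

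For (i), decompose $H=\sum_{Z}\Phi_H(Z)$. Since $[\Phi_H(Z),O]=0$ whenever $Z\cap\supp(O)=\emptyset$, only finitely many terms survive, so
\[
\Vert[H,O]\Vert\le 2\Vert O\Vert\!\!\sum_{Z:\,Z\cap\supp(O)\neq\emptyset}\!\!\Vert\Phi_H(Z)\Vert\le 2\Vert O\Vert\sum_{x\in\supp(O)}\ \sum_{Z\ni x}\Vert\Phi_H(Z)\Vert .
\]
Taking $y=x$ in the definition of $\Vert\Phi_H\Vert_{\zeta,0}$ gives $\sum_{Z\ni x}\Vert\Phi_H(Z)\Vert\le F_\zeta(0)\Vert\Phi_H\Vert_{\zeta,0}\le\Vert F_\zeta\Vert_1\Vert\Phi_H\Vert_{\zeta,0}$, the last step because the term $z=x$ already appears in the sum defining $\Vert F_\zeta\Vert_1$. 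Summing over the $\vert\supp(O)\vert$ points $x$ yields the claim.

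For (ii), the core step is: for $A,B\in\caL_{\zeta,m+1}$, with $\Phi_{[A,B]}$ the interaction from \eqref{Interaction Commutator},
\[
\Vert\Phi_{[A,B]}\Vert_{\zeta,m}\le C\,2^{\,m+1}\,\Vert\Phi_A\Vert_{\zeta,m+1}\Vert\Phi_B\Vert_{\zeta,m+1},
\]
where $C$ is an absolute constant times $\max\{C_F,\Vert F\Vert_1\}$, hence independent of $m$ and (since $C_{F_\zeta}\le C_F$ and $\Vert F_\zeta\Vert_1\le\Vert F\Vert_1$) of $\zeta$. To prove it, bound $\Vert\Phi_{[A,B]}(Z)\Vert\le 2\sum_{X\cup Y=Z,\,X\cap Y\neq\emptyset}\Vert\Phi_A(X)\Vert\Vert\Phi_B(Y)\Vert$ and $\vert Z\vert^m\le 2^m\vert X\vert^m\vert Y\vert^m$ (both $X,Y$ nonempty), and estimate $\sum_{Z\ni\{x,y\}}\vert Z\vert^m\Vert\Phi_{[A,B]}(Z)\Vert/F_\zeta(d(x,y))$ by splitting on which of $X,Y$ contains $x$, which contains $y$, and inserting a summed intersection point $z\in X\cap Y$. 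If $x$ and $y$ sit in different factors, the chains $X\ni\{x,z\}$ and $Y\ni\{z,y\}$ are controlled directly by $\Vert\Phi_A\Vert_{\zeta,m}$, $\Vert\Phi_B\Vert_{\zeta,m}$ and the sum over $z$ by $\sum_z F_\zeta(d(x,z))F_\zeta(d(z,y))\le C_F F_\zeta(d(x,y))$, with no loss of locality. If $x,y$ sit in the same factor, say $X$, one has $X\ni\{x,y\}$ and $X\ni z$; the forced sum $\sum_{z\in X}$ turns $\vert X\vert^m$ into $\vert X\vert^{m+1}$ --- this is the only place a power of $\vert Z\vert$ is lost --- while $\sum_{Y\ni z}\vert Y\vert^m\Vert\Phi_B(Y)\Vert\le\Vert F\Vert_1\Vert\Phi_B\Vert_{\zeta,m}$. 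Since $\Vert\cdot\Vert_{\zeta,m}\le\Vert\cdot\Vert_{\zeta,m+1}$, all cases are dominated by the right-hand side above.

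Part (ii) then follows by induction on $k$ applied to $B_j:=\ad{A_j}\cdots\ad{A_1}(A_0)=[A_j,B_{j-1}]$, carried out at locality level $n+k-j$ for $B_j$ so that $B_k$ lands at level $n$: the single-step bound gives $\Vert\Phi_{B_j}\Vert_{\zeta,n+k-j}\le C\,2^{\,n+k+1-j}\Vert\Phi_{A_j}\Vert_{\zeta,n+k}\Vert\Phi_{B_{j-1}}\Vert_{\zeta,n+k+1-j}$, and each $A_i$ enters at level $\le n+k$. Taking the product over $j=1,\dots,k$ produces the constant $C^k$ and the power of two $2^{\sum_{j=1}^{k}(n+k+1-j)}=2^{\,nk+k(k+1)/2}\le 2^{\,k(n+k)}$ for $k\ge1$, which is the asserted bound. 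Finally (iii) is immediate from (ii): if every $A_i\in\caL_{\zeta,\infty}$ then $\Vert\Phi_{A_i}\Vert_{\zeta,n+k}<\infty$ for all $n$, so $\Vert\Phi_{\ad{A_k}\cdots\ad{A_1}(A_0)}\Vert_{\zeta,n}<\infty$ for all $n$, i.e.\ the nested commutator lies in $\caL_{\zeta,\infty}$.

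The step I expect to be the main obstacle is the bookkeeping inside the single-commutator estimate: verifying that the locality loss is \emph{exactly} one power of $\vert Z\vert$ per commutator (so that the norms at level $n+k$ on the right suffice), dealing cleanly with the degenerate configurations $x=y$ or $z\in\{x,y\}$, and checking that the accumulated powers of two telescope to $2^{k(n+k)}$ rather than something worse. None of this is conceptually hard, but it demands care with the combinatorics of the set decompositions.
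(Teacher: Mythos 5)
Your proposal is correct and follows essentially the same route as the paper: part (i) by summing the interaction terms over the support of $O$, part (ii) by a single-commutator estimate that splits $Z=X\cup Y$ according to where $x,y$ lie, loses exactly one power of $\vert Z\vert$ through the sum over an intersection point, and uses the convolution property of $F_\zeta$ together with $\Vert F_\zeta\Vert_1$, followed by induction on $k$; (iii) is then immediate. The only cosmetic differences are your bound $\vert Z\vert^m\leq 2^m\vert X\vert^m\vert Y\vert^m$ in place of the paper's binomial expansion of $\vert Z\vert^n$, and your explicit bookkeeping of the accumulated powers of two giving $2^{nk+k(k+1)/2}\leq 2^{k(n+k)}$, which the paper leaves implicit in ``the result follows by induction.''
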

\begin{proof}
Claim~(i) follows from
\begin{equation*}
\Vert[H,O]\Vert\leq 2 \Vert O \Vert\sup_{\Lambda\in\caF(\Gamma) } \sum_{x\in\mathrm{supp}O}\sum_{y\in\Lambda}\sum_{\substack{X\subset\Lambda :\\ x,y\in X}}\frac{\Vert \Phi^\Lambda_H(X)\Vert}{F_\zeta(d(x,y))}F_\zeta(d(x,y))
 \leq 2 \Vert O \Vert \vert \mathrm{supp}O\vert \Vert F_\zeta\Vert_1 \Vert \Phi_H\Vert_{\zeta,0}.
\end{equation*}
For Claim~(ii), we first recall that the interaction terms of $G = [A_1,A_0]$ are given by $[\Phi^\Lambda_{A_1}(X_1),\Phi^\Lambda_{A_0}(X_0)]$ whenever $X_0\cap X_1\neq \emptyset$. With $Z = X_0\cup X_1$ and hence
\begin{equation*}
\vert Z\vert^n \leq\sum_{k=0}^n\binom{n}{k}\vert X_0\vert^k\vert X_1\vert^{n-k},
\end{equation*}
we distinguish two contributions to the sum $\sum_{Z\ni \{x,y\}}\vert Z\vert^n\Vert \Phi_G^\Lambda(Z)\Vert/F_\zeta(d(x,y))$: Either $x,y\in X_0$ or $x\in X_0,y\in X_1\setminus X_0$. The first contribution can be bounded by
\begin{multline*}
\sum_{k=0}^n\binom{n}{k}\sum_{X_0\ni\{x,y\}}\vert X_0\vert^k\frac{\Vert\Phi^\Lambda_{A_0}(X_0)\Vert}{F_\zeta(d(x,y))}\sum_{z_0\in X_0}\sum_{z_1\in\Lambda}\sum_{X_1\ni\{z_0,z_1\}}\vert X_1\vert^{n-k}\frac{\Vert\Phi_{A_1}^\Lambda(X_1)\Vert}{F_\zeta(d(z_0,z_1))}F_\zeta(d(z_0,z_1)) \\
\leq \Vert F_\zeta\Vert_1 \sum_{k=0}^n \binom{n}{k} \Vert \Phi_{A_0}\Vert_{\zeta,k+1}\Vert \Phi_{A_1}\Vert_{\zeta,n-k}.
\end{multline*}
The second contribution is estimated as
\begin{multline*}
\sum_{k=0}^n\binom{n}{k}\sum_{z\in \Lambda}\frac{F_\zeta(d(z,y))F_\zeta(d(x,z)) }{F_\zeta(d(x,y))}\sum_{X_0\ni\{x,z\}}\vert X_0\vert^k\frac{\Vert\Phi^\Lambda_{A_0}(X_0)\Vert}{F_\zeta(d(x,z))}\sum_{X_1\ni\{z,y\}}\vert X_1\vert^{n-k}\frac{\Vert\Phi^\Lambda_{A_1}(X_1)\Vert}{F_\zeta(d(z,y))}\\
\leq C_\zeta \sum_{k=0}^n\binom{n}{k}\Vert \Phi_{A_0}\Vert_{\zeta,k}\Vert \Phi_{A_1}\Vert_{\zeta,n-k}.
\end{multline*}
Both estimates imply that if $A_0,A_1\in\caL_{\zeta,\infty}$, then $[A_1,A_0]\in\caL_{\zeta,\infty}$. More precisely, if $A_0,A_1\in\caL_{\zeta,n+1}$, then $[A_1,A_0]\in\caL_{\zeta,n}$ with
\begin{equation*}
\Vert\Phi_{[A_1,A_2]}\Vert_{\zeta,n}\leq C 2^n\Vert \Phi_{A_1} \Vert_{\zeta,n+1}\Vert \Phi_{A_2} \Vert_{\zeta,n+1},
\end{equation*}
where $C$ depends on $\zeta$ but not on $n$, and we used that $\Vert \Phi\Vert_{\zeta,n}\leq \Vert \Phi\Vert_{\zeta,m}$ whenever $n\leq m$. The result follows by induction. Finally, Claim~(iii) is an immediate consequence of~(ii).
\end{proof}
\begin{lemma}\label{lma:rest}
Let $H\in\caL_{\caE,0}$ generate the dynamics $\tau_{t,t'}$, and let $v$ be the corresponding Lieb-Robinson velocity~(\ref{LRv}). Then for {any $O$ with $\mathrm{supp}(O) \subset \Lambda$} and $A\in\caL_{\zeta,n}$, then there is a $C>0$ such that  
\begin{equation*}
\left\Vert \left[A , \tau_{t,t'}(O) \right]\right\Vert\leq  C \Vert \Phi_A\Vert_{\zeta,0} \vert \mathrm{supp}(O)\vert^2 |t-t'|^{d},
\end{equation*}
whenever $ |t-t'|\geq v^{-1}$. 
\end{lemma}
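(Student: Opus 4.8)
The plan is to estimate the commutator $\left\Vert \left[A , \tau_{t,t'}(O) \right]\right\Vert$ by decomposing $A$ into its interaction terms $\Phi_A(Z)$ and applying the Lieb-Robinson bound \eqref{LR bound} to each term, paying attention to how the spatial sum over $Z$ interacts with the exponentially-growing-in-time prefactor of the Lieb-Robinson estimate. The key idea is that the Lieb-Robinson bound gives a competition between the exponential decay $\zeta(d(Z,\supp(O)))$ in the spatial separation and the exponential growth $\ep{v\mu|t-t'|}$ in time (writing things for $\zeta(r)=\ep{-\mu r}$, which suffices since $A\in\caL_{\caE,0}$ is implicitly what we need, or more generally one uses that $A\in\caL_{\zeta,n}$ with $\zeta\in\caS$ decays faster than any polynomial); terms with $d(Z,\supp(O)) \gg v|t-t'|$ are exponentially suppressed, so effectively only sets $Z$ within distance $\sim v|t-t'|$ of $\supp(O)$ contribute, and there are polynomially many such sites by the $d$-dimensionality \eqref{ddim}, producing the factor $|t-t'|^d$.

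\medskip

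\noindent First I would write
\begin{equation*}
\left[A,\tau_{t,t'}(O)\right] = \sum_{Z\in\caF(\Gamma)}\left[\Phi_A(Z),\tau_{t,t'}(O)\right],
\end{equation*}
and split the sum according to whether $d(Z,\supp(O))$ is small or large relative to $v|t-t'|$. For each term, I would use two bounds: the trivial one $\Vert[\Phi_A(Z),\tau_{t,t'}(O)]\Vert\leq 2\Vert\Phi_A(Z)\Vert\,\Vert O\Vert$, and the Lieb-Robinson bound \eqref{LR bound}, which for a single pair $X=Z$, $Y=\supp(O)$ gives a factor $\ep{2C_\zeta\Vert\Phi\Vert_{\zeta,0}|t-t'|}\sum_{x\in Z,y\in\supp(O)}F_\zeta(d(x,y))$, and when $d(Z,\supp(O))>0$ this is $\leq \Vert F\Vert_1\min(|Z|,|\supp(O)|)\,\ep{v\mu|t-t'|}\,\zeta(d(Z,\supp(O)))$ after using \eqref{LRv} (there is a subtlety that the Lieb-Robinson bound is stated for the Hamiltonian generating $\tau$, not for $A$; one applies it with $O^X=\Phi_A(Z)$, $O^Y=O$, and the velocity $v$ is that of $H$). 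Summing the Lieb-Robinson estimate over $Z$ at distance $\geq v|t-t'|$ from $\supp(O)$ kills the exponential growth and leaves a convergent spatial sum bounded using $\Vert\Phi_A\Vert_{\zeta,0}$; summing the trivial bound over $Z$ within distance $v|t-t'|$ of $\supp(O)$ yields, by \eqref{ddim}, a factor $\kappa(v|t-t'|)^d$ per site of $\supp(O)$, hence $|\supp(O)|\cdot\kappa(v|t-t'|)^d$ overall, and one more factor $|\supp(O)|$ appears because each $\Phi_A(Z)$ intersecting a fixed neighborhood is summed against $\Vert\Phi_A\Vert_{\zeta,0}$ which involves a pair $\{x,y\}$, with $x$ ranging over $\supp(O)$.

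\medskip

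\noindent \textbf{The main obstacle} I expect is bookkeeping the two $|\supp(O)|$ factors and the $|t-t'|^d$ factor simultaneously and cleanly: one must be careful that the split point $v|t-t'|$ is chosen so that the exponential $\ep{v\mu|t-t'|}$ from the Lieb-Robinson bound is beaten by $\zeta(d(Z,\supp(O)))=\ep{-\mu d(Z,\supp(O))}$, which forces the condition $|t-t'|\geq v^{-1}$ (so that the "large distance" regime $d\geq v|t-t'|\cdot$(something) is nonempty and the geometric series over distance shells converges with room to spare), and that the polynomial volume growth \eqref{ddim} is applied to the correct neighborhood. A secondary point is that we only assumed $A\in\caL_{\zeta,n}$ for a general $\zeta\in\caS$ rather than a pure exponential; here one uses that any $\zeta\in\caS$ decays faster than $r^{-(d+1)}$, which is still enough to make the tail sum $\sum_{Z: d(Z,\supp(O))\geq v|t-t'|}|Z|\,\Vert\Phi_A(Z)\Vert$ bounded — in fact bounded uniformly in $t$ — so the exponential-growth regime contributes only a constant, and the stated $|t-t'|^d$ comes entirely from the near region. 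Once these estimates are combined, absorbing $v$, $\kappa$, $\Vert F_\zeta\Vert_1$, $C_\zeta$ and the numerical constants into a single $C$, one arrives at the claimed bound $C\Vert\Phi_A\Vert_{\zeta,0}|\supp(O)|^2|t-t'|^d$ for $|t-t'|\geq v^{-1}$.
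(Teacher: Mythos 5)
Your route is genuinely different from the paper's (the paper decomposes the \emph{evolved observable} $\tau_{t,t'}(O)=\sum_k O^k$ into strictly local pieces via partial traces over fattened supports $X_{v\delta t+k}$, bounds $\Vert O^k\Vert\lesssim \Vert O\Vert\,\vert X\vert\,\ep{-\mu k}$ by Lieb--Robinson plus the local-approximation result, and then applies Lemma~\ref{lem:IntComm}(i) to each piece), whereas you decompose $A=\sum_Z\Phi_A(Z)$ and split at the light cone. That strategy can be made to work, and the near-region part of your argument is fine: sets $Z$ meeting $X_{v\delta t}$ are controlled by the trivial commutator bound, the volume count \eqref{ddim}, and the single-site sum $\sum_{Z\ni z}\Vert\Phi_A(Z)\Vert\leq F_\zeta(0)\Vert\Phi_A\Vert_{\zeta,0}$, giving $C\Vert\Phi_A\Vert_{\zeta,0}\Vert O\Vert\,\vert\supp(O)\vert\,(v\vert t-t'\vert)^d$.

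However, your treatment of the far region contains a genuine gap. You propose to beat the Lieb--Robinson growth $\ep{v\mu\vert t-t'\vert}$ entirely with the factor $\zeta_H(d(Z,\supp(O)))$ and then to control what is left by the boundedness of $\sum_{Z:\,d(Z,\supp(O))\geq v\vert t-t'\vert}\vert Z\vert\,\Vert\Phi_A(Z)\Vert$, attributing its convergence to the decay of $\zeta\in\caS$. This quantity is \emph{not} bounded uniformly in the volume: the norm $\Vert\Phi_A\Vert_{\zeta,0}$ only controls sums over sets $Z$ containing a \emph{fixed} site (the decay $\zeta$ is in the separation of two points inside $Z$), and $\Vert\Phi_A(Z)\Vert$ has no decay whatsoever in the distance from $Z$ to $\supp(O)$ --- for a translation-invariant finite-range interaction the sum is of order $\vert\Lambda\vert$, which defeats the whole point of the lemma. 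The repair is to \emph{not} spend all of the Lieb--Robinson decay against the exponential growth: keep the factor $\ep{v\mu\vert t-t'\vert}F_{\zeta_H}(d(x,y))$ inside the double sum of \eqref{LR bound}, sum over $Z\ni x$ using $\Vert\Phi_A\Vert_{\zeta,0}$, and then sum over $x$ with $d(x,y)\geq v\vert t-t'\vert$ for each $y\in\supp(O)$; the surviving exponential $\ep{-\mu(d(x,y)-v\vert t-t'\vert)}$ (together with \eqref{ddim}) makes this spatial sum convergent uniformly in $\Lambda$ and of order $\Vert\Phi_A\Vert_{\zeta,0}\Vert O\Vert\,\vert\supp(O)\vert$. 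With that correction, and using $\vert t-t'\vert\geq v^{-1}$ to absorb the far-region constant into $\vert t-t'\vert^d$, your decomposition does deliver the claimed bound; as written, though, the decisive estimate rests on a property the interaction norm does not provide.
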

\begin{proof}
For any $Y \subset \Gamma$ and $n \geq 0$, denote by
\begin{equation*}
Y_n := \left\{ z \in \Gamma : d(z,Y) \leq n \right\}
\end{equation*}
the fattening of the set $Y$ by $n$. If $O\in\caA^X$, let
\begin{equation*}
O^0:= \Pi^{X_{v\delta t}}(\tau_{t,t'}(O)),
\end{equation*}
where $\delta t = |t-t'|$ and $\Pi^{Y}:\caA\to\caA^Y$ is the partial trace, and for any $k\geq 1$,
\begin{equation*}
O^k:= \Pi^{X_{v\delta t+k}}(\tau_{t,t'}(O)) - \Pi^{X_{v\delta t+(k-1)}}(\tau_{t,t'}(O)).
\end{equation*}
By construction, $\tau_{t,t'}(O) = \sum_{k=0}^\infty O^k$, where the sum is actually finite since the underlying $\Lambda$ is finite. Since $O^k$ is strictly local, Lemma~\ref{lem:IntComm}(i) implies that, {with $X=\supp(O)$},
\begin{equation*}
\Vert [A, O^k] \Vert \leq C \Vert \Phi_A\Vert_{\zeta,0} \Vert O^k \Vert \vert X \vert \delta t^d k^d
\end{equation*}
where we used that
\begin{equation*}
\left\vert X_{v\delta t+k} \right\vert \leq \vert X\vert \kappa (v\delta t+k)^d \leq \vert X\vert \kappa (2k v\delta t)^d
\end{equation*}
since the lattice is $d$-dimensional, see~(\ref{ddim}). In order to bound the norm of $O^k$, we add and subtract an identity and recall that
\begin{equation*}
\left\Vert \left(\Pi^{X_{v\delta t+k}}-\id\right)(\tau_{t,t'}(O))\right\Vert \leq C \Vert O \Vert \vert X \vert \ep{-\mu k}
\end{equation*}
by the Lieb-Robinson bound and~\cite{Nachtergaele:2013ts}. Note that $C$ depends on the decay rate of $\Phi_H$. Gathering these estimates, we finally obtain
\begin{equation*}
\Vert [A, \tau_{t,t'}(O)]\Vert\leq C \Vert \Phi_A\Vert_{\zeta,0} \Vert O \Vert \vert X \vert^2 \vert t-t'\vert^d\sum_{k=0}^\infty   k^d \ep{-\mu k}
\end{equation*}
which is the claim since the series converges.
\end{proof}
The last result pertaining to Lieb-Robinson bounds expresses the locality of the generator of the spectral flow, and it is an adapted version of Theorem~4.8 in~\cite{automorphic}:
\begin{lemma}\label{lem:IofPhi}
If Assumption~\ref{A:Hamiltonians} holds, then
\begin{enumerate}
\item if $B\in\caL_{\caS,k+1}$, then $\caI(B)\in\caL_{\caS,k}$, and in particular $B\in\caL_{\caS,\infty}$ implies $\caI(B)\in\caL_{\caS,\infty}$
\item the generator $K$ of the spectral flow belongs to $\caL_{\caS,\infty}$.
\end{enumerate}
\end{lemma}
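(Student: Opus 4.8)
The plan is to manufacture an interaction for $\caI(B)$ directly, by evolving the interaction terms of $B$ under the dynamics $\tau_t=\tau^{s}_t$ and decomposing each evolved term according to how far the dynamics has spread it; the decay of the resulting interaction is controlled by pairing the Lieb--Robinson bound \eqref{LR bound} (which is exponential, since the Hamiltonian generating $\tau_t$ is $H=H^{(0)}\in\caL_{\caE,\infty}$ by Assumption~\ref{A:Hamiltonians}) against the rapid decay of $W_\gamma$. Part (ii) will then follow immediately from part (i) applied to $B=\dot H$.

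For (i), fix $\zeta\in\caS$ with $\Phi_B\in\caB_{\zeta,k+1}$. For finite $Z$ and $\ell\geq0$ put $Z_\ell=\{z:d(z,Z)\leq\ell\}$, let $\Pi^{Y}$ be the conditional expectation onto $\caA^{Y}$, and set $\Delta^{Z}_{0}(t)=\Pi^{Z}(\tau_t(\Phi_B(Z)))$ and $\Delta^{Z}_{\ell}(t)=(\Pi^{Z_\ell}-\Pi^{Z_{\ell-1}})(\tau_t(\Phi_B(Z)))$ for $\ell\geq1$, so that $\tau_t(\Phi_B(Z))=\sum_{\ell\geq0}\Delta^{Z}_{\ell}(t)$ with each piece self-adjoint and supported in $Z_\ell$. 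Choosing $W_\gamma$ real (which is compatible with its prescribed Fourier transform), I would define the candidate interaction
\begin{equation*}
\Phi_{\caI(B)}(Y):=\sum_{\ell\geq0}\ \sum_{Z:\,Z_\ell=Y}\ \int_{\bbR}W_\gamma(t)\,\Delta^{Z}_{\ell}(t)\,\dd t,
\end{equation*}
and check --- using linearity of $\Pi^{Y}$ and Fubini --- that it is a self-adjoint interaction with $\sum_{Y\subset\Lambda}\Phi_{\caI(B)}(Y)=\caI(B)^{\Lambda}$.

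The heart of the matter is the estimate on $\Vert\Phi_{\caI(B)}\Vert_{\zeta',k}$. On one hand $\Vert\Delta^{Z}_{\ell}(t)\Vert\leq2\Vert\Phi_B(Z)\Vert$ since $\tau_t$ is a $*$-automorphism; on the other hand the Lieb--Robinson bound together with the conditional-expectation estimate already used in the proof of Lemma~\ref{lma:rest} (see \cite{Nachtergaele:2013ts}) give $\Vert\Delta^{Z}_{\ell}(t)\Vert\leq C\vert Z\vert\,\Vert\Phi_B(Z)\Vert\,\ep{-\mu(\ell-v\vert t\vert)}$. Splitting the $t$-integral at $\vert t\vert=\ell/(2v)$ and using the rapid decay of $W_\gamma$ on the outer part (in the explicit construction of \cite{automorphic}, $W_\gamma$ has decay in the class $\caS$), one produces $\hat\zeta\in\caS$, depending only on $\mu,v$ and $W_\gamma$, with $\bigl\Vert\int_{\bbR}W_\gamma(t)\Delta^{Z}_{\ell}(t)\dd t\bigr\Vert\leq C\vert Z\vert\,\Vert\Phi_B(Z)\Vert\,\hat\zeta(\ell)$. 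Inserting this into the definition of $\Vert\cdot\Vert_{\zeta',k}$: a pair $\{x,y\}\subset Y=Z_\ell$ is witnessed by some $\{x',y'\}\subset Z$ with $d(x,x'),d(y,y')\leq\ell$, hence $d(x',y')\geq d(x,y)-2\ell$, while $\vert Y\vert=\vert Z_\ell\vert\leq\kappa\vert Z\vert(1+\ell)^d$ by \eqref{ddim}; summing over $Z$ at fixed $\{x',y'\}$ costs $\Vert\Phi_B\Vert_{\zeta,k+1}F_\zeta(d(x',y'))$, the number of admissible $x'$ (resp.\ $y'$) is at most $\kappa(1+\ell)^d$, and the residual sum $\sum_{\ell\geq0}(1+\ell)^{dk+2d}\hat\zeta(\ell)F_\zeta(\max(0,d(x,y)-2\ell))$ is bounded by $C\,F_{\zeta'}(d(x,y))$ for a suitable $\zeta'\in\caS$ by the standard resummation underlying \cite[Thm~4.8]{automorphic} (using that $\hat\zeta\in\caS$ beats every polynomial and that $\zeta$ is logarithmically superadditive). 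Thus $\Vert\Phi_{\caI(B)}\Vert_{\zeta',k}\leq C\Vert\Phi_B\Vert_{\zeta,k+1}<\infty$, i.e.\ $\caI(B)\in\caL_{\caS,k}$; the single lost power of locality is exactly the factor $\vert Z\vert$ produced by the Lieb--Robinson estimate on $\Delta^{Z}_{\ell}$. Since $\caL_{\caS,\infty}=\cap_k\caL_{\caS,k}$, this gives the last assertion of (i) as well. For (ii): $\dot H=H^{(1)}$ has the associated interaction $\Phi^{(1)}\in\caB_{\caE,\infty}\subset\caB_{\caS,\infty}$ by Assumption~\ref{A:Hamiltonians}, so $\dot H\in\caL_{\caS,k+1}$ for every $k$, and part (i) gives $K=\caI(\dot H)\in\caL_{\caS,k}$ for every $k$, that is $K\in\caL_{\caS,\infty}$.

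The step I expect to be the main obstacle is the bookkeeping of the final $\ell$-summation: one must verify that the polynomial factors from the support-fattening --- $(1+\ell)^d$ for the volume and $(1+\ell)^d$ each for the sums over the witnessing sites $x',y'$ --- together with the geometric distortion $F_\zeta(d(x,y)-2\ell)/F_{\zeta'}(d(x,y))$ are absorbed by $\hat\zeta(\ell)$ into a single $\zeta'\in\caS$, and that precisely one power of $\vert Z\vert$ is consumed. The remaining ingredients --- well-definedness and self-adjointness of $\Phi_{\caI(B)}$, convergence of the $t$-integral, and the identity $\sum_{Y}\Phi_{\caI(B)}(Y)=\caI(B)$ --- are routine.
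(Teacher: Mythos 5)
Your proposal follows essentially the same route as the paper's proof: the same shell decomposition of $\int W_\gamma(t)\tau_t(\Phi_B(Z))\,\dd t$ via conditional expectations onto fattened supports with the Lieb--Robinson/$W_\gamma$ bound $\leq C\vert Z\vert\,\Vert\Phi_B(Z)\Vert\,\hat\zeta(\ell)$, the same candidate interaction $\Phi_{\caI(B)}(Y)=\sum_{\ell}\sum_{Z:Z_\ell=Y}(\cdots)$, the same witness-counting and split of the $\ell$-sum at a fraction of $d(x,y)$, and the same deduction of (ii) from (i) with $K=\caI(\dot H)$. The only point you defer --- checking that the resulting decay function, which a priori need not be logarithmically superadditive, can be dominated by a genuine $\zeta'\in\caS$ --- is exactly what the paper settles with the subadditive-minorant construction of Lemma~\ref{lma:sub}, so your ``main obstacle'' is real but closable by that standard device.
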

\begin{proof}
(i) First, we note that by a slightly extended argument to the one above, for any $O \in \mathcal{A}^X$, there are operators $\mbox{supp} \left( \Delta_{n}(O) \right) \subset X_{n} \cap \Lambda$, such that
\begin{equation}\label{Deltas}
 \int_{- \infty}^{\infty} \tau_t^\Lambda(O) W_{\gamma}(t) \, dt = \sum_{n=0}^{\infty} \Delta_{n}^\Lambda(O),
\end{equation}
and there is a function $\tilde \zeta\in\caS$, independent of $\Lambda$ and $s$, such that
\begin{equation} \label{eq:Delbd}
\left\| \Delta_{n}^\Lambda(O) \right\| \leq \Vert O \Vert \vert \mathrm{supp}(O)\vert \tilde \zeta(n),
\end{equation}
see also Lemma~C.3 of~\cite{bachmann2016lieb}.

Now, let $B\in\caL_{\caS,\infty}$ with associated $\Phi_B\in\caB_{\zeta,n}$ for a $\zeta\in\caS$, and recall that
\begin{equation*}
\caI^\Lambda(B^\Lambda) = \sum_{Z \in\caF(\Gamma)} \int_{-\infty}^{\infty} W_{\gamma}(t) \tau^\Lambda_t \left( \Phi_B^\Lambda(Z) \right)  dt.
\end{equation*}
In the notation above, we define an interaction for $\caI(B)$ by
\begin{equation}\label{eq:defPsiLambda}
\Phi_{\caI(B)}^\Lambda(Z) := \sum_{n \geq 0} \sum_{\substack{Y \subset \Lambda: \\ Y_n =Z}} \Delta_{n}^\Lambda \left ( \Phi_B^\Lambda(Y)\right),
\end{equation}
which satisfies $\caI^\Lambda(B^\Lambda) = \sum_{Z\subset\Lambda}\Phi_{\caI(B)}^\Lambda(Z)$ by~(\ref{Deltas}), and $\Phi_{\caI(B)}^\Lambda(Z)\in\caA^Z$ by construction. It remains to check the decay condition. We decompose
\begin{equation*}
\sum_{\stackrel{Z \subset \Lambda:}{x,y \in Z}}\vert Z\vert^k \Vert \Phi_{\caI(B)}^\Lambda(Z) \Vert 
 \leq \sum_{\stackrel{Z \subset \Lambda:}{x,y \in Z}} \sum_{\stackrel{Y,n \geq 0:}{Y_n =Z}} \vert Z\vert^k \Vert \Delta_{n}^\Lambda  ( \Phi_B^\Lambda(Y)) \Vert 
  =  \sum_{Y \subset \Lambda} \sum_{n \geq 0} \chi ( x,y \in Y_n ) \vert Y_n \vert^k \Vert \Delta_{n}^\Lambda  ( \Phi_B^\Lambda(Y)) \Vert
\end{equation*}
where $\chi$ is the indicator function and $\sup_{s\in[0,1]}$ is implicit everywhere. For a $d$-dimensional lattice,
\begin{equation*}
\vert Y_n\vert\leq   \vert Y \vert \kappa  n^d.
\end{equation*}
Rearranging the terms above,
\begin{multline} \label{PotentialEstimate}
 \sum_{\stackrel{Z \subset \Lambda:}{x,y \in Z}}\vert Z\vert^k \Vert \Phi_{\caI(B)}^\Lambda(Z) \Vert 
 \leq \kappa^k \sum_{\stackrel{Y \subset \Lambda:}{x,y \in Y}} \vert Y\vert^k \sum_{n \geq 0} n^{ kd} \Vert \Delta^\Lambda_{n}  ( \Phi_B^\Lambda(Y)) \Vert \\
 \quad + \kappa^k \sum_{m=1}^{\infty} \sum_{\stackrel{Y \subset \Lambda:}{x,y \in Y_m}} \chi( \{x,y\} \cap Y_{m-1}^c \neq \emptyset )  \vert Y\vert^k \sum_{n \geq m} n^{ kd} \Vert \Delta_{n}^\Lambda  ( \Phi_B^\Lambda(Y)) \Vert = S_1 + S_2.
\end{multline}
Now, $S_1$ can be estimated using~(\ref{eq:Delbd})
\begin{equation*}
S_1\leq \kappa^k \sum_{n \geq 0} n^{ kd} \tilde \zeta (n) \sum_{\stackrel{Y \subset \Lambda:}{x,y \in Y}} \vert Y \vert^{k+1} \Vert  \Phi_B^\Lambda(Y)\Vert 
\leq C \Vert  \Phi_B \Vert_{\zeta,k+1} F_{\zeta}(d(x,y)).
\end{equation*}
For $S_2$, we note that
\begin{equation} \label{eq:overcount}
\sum_{\stackrel{Y \subset \Lambda:}{x,y \in Y_m}} \chi( \{x,y\} \cap Y_{m-1}^c \neq \emptyset ) \leq  \sum_{z_1 \in B_m(x)} \sum_{z_2 \in B_m(y)} \sum_{\stackrel{Y \subset \Lambda:}{z_1,z_2 \in Y}} 1.
\end{equation}
This and again~(\ref{eq:Delbd}) yield
\begin{align*}
S_2 &\leq \kappa^k \sum_{m=1}^{\infty}  \sum_{z_1 \in B_m(x)} \sum_{z_2 \in B_m(y)} \sum_{\stackrel{Y \subset \Lambda:}{z_1,z_2 \in Y}} \vert Y \vert^{k+1} \Vert  \Phi_B^\Lambda(Y)\Vert 
\sum_{n \geq m} n^{ kd}  \tilde \zeta(n) \\
&\leq \kappa^k \Vert  \Phi_B \Vert_{\zeta,k+1}  \sum_{m=1}^{\infty} \bigg(\sum_{n \geq m} n^{ kd}  \tilde \zeta(n)\bigg)  \sum_{z_1 \in B_m(x)} \sum_{z_2 \in B_m(y)} F_{\zeta}(d(z_1,z_2)).
\end{align*}
Let $m_0 = d(x,y)/4$. Then for $m\leq m_0$,
\begin{equation*}
d(x,y)\leq d(x,z_1) + d(z_1,z_2) + d(z_2,y) \leq 2m + d(z_1,z_2)
\end{equation*}
so that $d(z_1,z_2)\geq d(x,y)/2$. We split the sum over $m_0$ in the bound for $S_2$, to obtain
\begin{equation*}
S_2   \leq C \Vert  \Phi_B \Vert_{\zeta,k+1}   F_{\zeta}(d(x,y)/2)  \sum_{m=1}^{m_0} m^{2d} +C \Vert  \Phi_B \Vert_{\zeta,k+1}  \sum_{m > m_0} m^d  \bigg(\sum_{n \geq m} n^{ kd}  \tilde\zeta(n)\bigg).
\end{equation*}
Here we wrote $C$ for coefficients that do not depend on $d(x,y)$ (or $\Lambda$), we used that $\tilde\zeta$ decays faster than any polynomial and that $F_{\zeta}$ is a decreasing function. 
Inspecting now the bounds for $S_1,S_2$ and using the fast decay of $\zeta$, we see that they can be cast in the form. 
\begin{equation*}
\sum_{\stackrel{Z \subset \Lambda:}{x,y \in Z}}\vert Z\vert^k \Vert \Phi_{\caI(B)}^\Lambda(Z) \Vert  \leq S_1+S_2 \leq  C \Vert  \Phi_B \Vert_{\zeta,k+1} F(d(x,y)) h(d(x,y)),
\end{equation*}
where $h$ is bounded, nonincreasing,  decays faster than any inverse power and it can be chosen such that  $h<1$ (by adjusting $C$). It remains to find a  $\xi'\in\caS$ such that $h\leq \xi'$ to conclude the proof. But the existence of such a function is guaranteed by Lemma~\ref{lma:sub} below, taking $f=-\log h$ and $\xi'=\exp(-\hat f)$.

(ii) The second statement follows immediately from~(i) since $K = \caI(\dot H)$ and $\dot H\in\caL_{\caE,\infty}$ by assumption.
\end{proof}

The above proof requires the following lemma.  A function $g$ on the positive reals $\mathbb{R}^+$ is called subadditive if 
$
g(x+y) \leq g(x) +g(y) 
$ for all $x,y$. 
Then 
\begin{lemma}\label{lma:sub}
Given a nondecreasing function $f$ on $\mathbb{R}^+$ for which $f(0)>0$, there is a subadditive, positive function $\hat f$ such that 
$$
\hat f \leq f.
$$ 
Moreover, if $f$ has the property that $f(x)-m\log x \to +\infty$ for any $m>0$, then $\hat{f}$ has this property as well. 
\end{lemma}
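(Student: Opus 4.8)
The plan is to take $\hat f$ to be the largest subadditive function dominated by $f$, realised concretely as an infimal-convolution power:
\begin{equation*}
\hat f(x) := \inf\Big\{\, \textstyle\sum_{i=1}^{n} f(x_i)\ :\ n\in\bbN,\ x_1,\dots,x_n\geq 0,\ \textstyle\sum_{i=1}^{n} x_i = x \,\Big\},\qquad x\geq 0 .
\end{equation*}
The trivial decomposition ($n=1$, $x_1=x$) gives $\hat f\leq f$. Since $f$ is nondecreasing and $f(0)>0$, every summand obeys $f(x_i)\geq f(0)$, so $\hat f(x)\geq f(0)>0$; thus the infimum is finite and $\hat f$ is positive. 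Subadditivity is immediate by concatenation: if $x=\sum_i x_i$ and $y=\sum_j y_j$ realise $\hat f(x)$ and $\hat f(y)$ up to $\eta>0$, then their union decomposes $x+y$, so $\hat f(x+y)\leq\hat f(x)+\hat f(y)+2\eta$, and we let $\eta\downarrow 0$. (Rescaling a near-optimal decomposition of $y$ by the factor $x/y\leq 1$ and using monotonicity of $f$ also shows $\hat f$ is nondecreasing; this is not in the statement but is what the eventual application — membership of $\exp(-\hat f)$ in $\caS$ — requires.)

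For the ``moreover'' I would argue by contradiction. Suppose there are $m_0>0$, $M\in\bbR$ and a sequence $x_k\to\infty$ with $\hat f(x_k)\leq m_0\log x_k+M$. For each $k$ choose a decomposition $x_k=\sum_{i=1}^{n_k} x_{k,i}$ into strictly positive parts with $\sum_i f(x_{k,i})\leq M_k:=m_0\log x_k+M+1$. Each term being $\geq f(0)>0$ forces $n_k\leq M_k/f(0)=O(\log x_k)$, so the largest part $x_k^{\ast}:=\max_i x_{k,i}\geq x_k/n_k$ satisfies $x_k^{\ast}\to\infty$ and $\log x_k^{\ast}\geq\frac{1}{2}\log x_k$ for all large $k$. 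But $f(x_k^{\ast})\leq\sum_i f(x_{k,i})\leq M_k$, hence for large $k$
\begin{equation*}
f(x_k^{\ast})-2m_0\log x_k^{\ast}\ \leq\ M_k-m_0\log x_k\ =\ M+1,
\end{equation*}
with $x_k^{\ast}\to\infty$ — contradicting the hypothesis $f(x)-(2m_0)\log x\to+\infty$. Hence $\hat f(x)-m\log x\to+\infty$ for every $m>0$.

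\textbf{Main obstacle.} The delicate part is exactly this last step, and it is the only place the growth hypothesis on $f$ is used: the infimum defining $\hat f$ ranges over arbitrarily fine partitions, so a priori $\sum_i f(x_i)$ could lie far below $f(x)$. The uniform lower bound $f\geq f(0)>0$ is precisely what tames this, capping the number of pieces at $O(\log x)$ and thereby forcing one piece to retain, up to an additive $o(\log x)$, the full logarithmic size of $x$; inserting that into the super-logarithmic growth of $f$ completes the argument. All the remaining verifications are routine.
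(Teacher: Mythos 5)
Your construction of $\hat f$ is exactly the one the paper uses (the minimal-decomposition infimum of Bruckner), and your verification of positivity, $\hat f\leq f$, and subadditivity matches the paper's routine part. Where you genuinely diverge is the ``moreover'' claim: the paper never touches the decompositions directly, but instead invokes the maximality property of $\hat f$ (any subadditive $g\leq f$ satisfies $g\leq\hat f$) and, for each $m$, builds an explicit concave comparison function $g_m$ --- linear up to a threshold $y_m$, then $m\log x + \mathrm{const}$ --- with $g_m\leq f$, so that $\hat f\geq g_m$ and the super-logarithmic growth follows. You argue directly from the definition by contradiction: since every piece of a decomposition contributes at least $f(0)>0$, a value $\hat f(x_k)=O(\log x_k)$ caps the number of pieces at $O(\log x_k)$, forcing a largest piece $x_k^{\ast}$ with $\log x_k^{\ast}\geq\tfrac12\log x_k$, and then $f(x_k^{\ast})\leq M_k$ contradicts $f(x)-2m_0\log x\to+\infty$. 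Both arguments are correct and hinge on the same two inputs ($f\geq f(0)>0$ and the super-logarithmic growth of $f$); yours is more elementary and self-contained (no comparison functions, no appeal to the maximality property), while the paper's yields slightly more, namely the explicit pointwise lower bound $\hat f\geq g_m$ for every $m$, with the concavity of $g_m$ doing the subadditivity bookkeeping. Your parenthetical observation that $\hat f$ can be taken nondecreasing is a sensible extra for the application but, as you note, not part of the statement.
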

\begin{proof}
Set, following  \cite{bruckner1960minimal},
$$
\hat f(x) := \inf_{(x_i): \sum_i x_i=x}   \, \,  \sum_i  f(x_i)  
$$
Then by straightforward considerations, one verifies
\begin{enumerate}
\item[$a)$]  $0 < \hat f \leq f$,
\item[$b)$] $\hat f$ is subadditive, 
\item[$c)$] If $g$ is subadditive and $g \leq f$, then $g \leq \hat{f}$.
\end{enumerate}
Using properties of $f$, for any $m$, we can find $y_m\geq \mathrm{e}$ large enough so that 
\begin{enumerate}
\item $f(x) \geq m\log(x) $ for $x \geq y_m$,
\item $f(x)\geq mx/y_m$ for $x<y_m$. 
\end{enumerate}
Now consider the function
$$
g_m(x) :=\begin{cases}  (\tfrac{m}{y_m}) x  \qquad & \text{for}\,    x \leq y_m \\    
m\log x-m\log y_m +m & \text{for}\,  x \geq y_m \end{cases}
$$
This function has been constructed so as to be of class $C^1$, concave (hence subadditive), and to satisfy $ g_m \leq f$. 
From item $c)$ it then follows that $\hat{f} \geq g_m$.   Since this holds for any $m$, the second claim is proven.

\end{proof}

\section*{Acknowledgements}
The authors would like to thank Gian Michele Graf and Vojkan Jak\v{s}i\'c for insightful discussions. We also thank S.~Teufel for his comments on the first version of the article that helped us improved subsequent versions.
WDR is supported by the Flemish Research Fund (FWO).

\end{document}